\documentclass[journal]{IEEEtran}
\usepackage{graphicx}
\usepackage{amsmath}
\usepackage[noend]{algpseudocode}
\usepackage{algorithmicx,algorithm}
\usepackage{amsthm}
\usepackage{amsfonts}
\usepackage{subfigure} 
\usepackage{color}
\usepackage{cite}
\usepackage{setspace}

\newtheorem{Corollary}{Corollary}
\newtheorem{Proposition}{Proposition}
\usepackage{amssymb}
\usepackage{color}
\graphicspath{{images/}}
\usepackage{booktabs}
\usepackage{multirow} 
\usepackage{makecell}
 \usepackage[numbers,sort&compress]{natbib}
 
\usepackage{stfloats}
\usepackage{amsmath}
\usepackage{amssymb}
\usepackage{enumitem}
\usepackage{threeparttable}
\usepackage{comment}
\usepackage{bbm}
\usepackage{bm}
\usepackage{colortbl}
\usepackage{tabularx}
\usepackage[table]{xcolor}
\usepackage{algorithm}
\usepackage{algpseudocode}
\usepackage{url}
\algrenewcommand\algorithmicrequire{\textbf{Input:}}
\algrenewcommand\algorithmicensure{\textbf{Output:}}
\begin{document}	
\title{Kinetic Handover Modeling in Nonstationary Cellular Networks with Fluid Spatiotemporal Stochastic Geometry}
	\author{Wen-Yu Dong,~\IEEEmembership{Member,~IEEE}, Zheng Lin, Jinfeng Xu, Yuxin Zhang, Zihan Fang
		\\ Qi Bi,~\IEEEmembership{Fellow,~IEEE}, and Sheng Chen,~\IEEEmembership{Life Fellow,~IEEE}	%
	
	\thanks{W.-Y. Dong and Q. Bi are with Future Technology Research Center, China Telecom Research Institute, Beijing 102209, China (E-mails: dongwy@chinatelecom.cn; qibi@chinatelecom.cn)} %
	\thanks{Z. Lin is with the Interdisciplinary Centre for Security, Reliability and Trust (SnT), University of Luxembourg, L-1855 Luxembourg, Luxembourg (E-mail: zhenglin@ieee.org).}
	\thanks{J. Xu is with the Department of Electrical and Computer Engineering, The University of British Columbia, Vancouver, Canada (E-mail: jinfeng.xu@ubc.ca).}
	\thanks{ Y. Zhang is with the Institute of Space Internet, Fudan University, Shanghai 200433, China (E-mail: yxzhang24@m.fudan.edu.cn).}
	\thanks{Z. Fang is with the Hong Kong JC STEM Lab of Smart City and Department of Computer Science, City University of Hong Kong, Kowloon, Hong Kong, China (E-mail: zihanfang3-c@my.cityu.edu.hk).}
	\thanks{S. Chen is with School of Electronics and Computer Science, University of Southampton, Southampton SO17 1BJ, U.K. (E-mail: sqc@ecs.soton.ac.uk).} %
	\vspace*{-5mm}
}

	% The paper headers
%	\markboth{Journal of \LaTeX\ Class Files,~Vol.~14, No.~8, August~2021}%
%	{Shell \MakeLowercase{\textit{et al.}}: Stochastic Geometry Based Modeling and Analysis of Cooperative Satellite-Aerial-Terrestrial Systems for Nomadic Communications with Weak Signal Coverage}
	
	%	\IEEEpubid{0000--0000/00\$00.00~\copyright~2021 IEEE}
	% Remember, if you use this you must call \IEEEpubidadjcol in the second
	% column for its text to clear the IEEEp  ubid mark.
	
\maketitle 
\vspace{-1cm}

\begin{abstract}
	Inferring directional handover traffic from aggregate mobility observations remains challenging because population density evolution does not fully describe the movement of individual users across cell boundaries. This paper develops a kinetic framework for geometry-induced serving-region crossings in nonstationary cellular networks, connecting boundary-normal velocity statistics with fluid spatiotemporal stochastic geometry (F-STSG) for density-based transport reconstruction. The framework distinguishes signed net population transfer from gross bidirectional crossings and reveals two information limits: density evolution does not uniquely determine physical pairwise transport, while first-order transport alone cannot resolve counter-moving crossings. We establish exact recovery conditions for potential transport and develop coherent and Gaussian mobility closures for estimating directional crossing rates. Sharp moment-constrained bounds further characterize the remaining uncertainty under bounded normal velocities without assuming a specific velocity distribution. An error bound links directional estimation accuracy to transport reconstruction and unresolved mobility information. The formulation also recovers the classical stationary Poisson--Voronoi handover law. Numerical experiments examine the information limits, reconstruction sensitivity, and closure accuracy. On 47 held-out vehicular simulation windows, F-STSG achieves lower net-flow reconstruction errors than three baselines, while auxiliary velocity statistics improve directional estimates under matched Gaussian mobility.
\end{abstract}

\begin{IEEEkeywords}
	Fluid spatiotemporal stochastic geometry, handover modeling, mobility analysis, nonstationary cellular networks, stochastic geometry.
\end{IEEEkeywords}

\section{Introduction}\label{sec:introduction} % S1

Mobility management is a core function of cellular networks because changes in serving-cell association affect signaling overhead, service continuity, and radio resource utilization~\cite{Shayea2020Survey,Gures2020Survey}. Network densification, heterogeneous cell sizes, and high-frequency operation further increase the frequency and complexity of mobility-related association changes \cite{Tabassum2019Mobility,lin2021spatial}. Handover (HO) traffic can consequently vary substantially across neighboring cells and over time, particularly in the presence of collective mobility associated with commuting corridors, transportation hubs, and large public events. Characterizing when, where, and in which direction such association changes arise is therefore relevant to mobility-aware network dimensioning and control. At the trajectory-prediction level, diffusion-based methods have also been developed to forecast individual agent motion~\cite{yang2025trajdiff,zhang2026large}. Such methods address user-level motion prediction, whereas the present work characterizes directional cell-boundary crossings from aggregate population observations.

The modeling difficulty arises from the information lost between individual boundary crossings and aggregate mobility observations. At the microscopic level, a geometry-induced serving-region transition occurs when an individual user equipment (UE) crosses a boundary between neighboring serving regions. At the macroscopic level, however, a population density field retains neither individual trajectories nor the complete local velocity distribution. This leads to two distinct information losses. First, the evolution of the population density constrains the divergence of the physical population flux but does not, in general, uniquely determine the physical transfer between individual cell pairs. Second, even if the physical first-order flux is known, counter-moving users can cancel in the signed flux while still contributing to the total number of boundary crossings. A population-level HO model must therefore distinguish signed net transfer from gross crossing traffic and, separately, distinguish the physical mobility flux from a transport field reconstructed from aggregate observations.

This paper develops a kinetic characterization of mobility-induced serving-region crossings. For each neighboring cell pair, the directional crossing rate is defined from the local population density and the positive part of the boundary-normal UE velocity. This representation separates the signed net transfer between two cells from their gross bidirectional crossing traffic. The former is determined by the boundary-normal component of the physical first-order population flux, whereas the latter generally depends on additional local velocity information. The resulting formulation makes explicit which HO quantities are determined by first-order population transport and which remain unresolved when only aggregate motion is available.

Fluid spatiotemporal stochastic geometry (F-STSG) \cite{Dong2026FSTSG} is then used to connect this characterization with density-based observation. From successive spatial density snapshots, F-STSG constructs a canonical first-order transport field that is consistent with the observed density evolution through a minimum-energy formulation. This reconstructed field is not assumed to coincide with the physical flux in general. Rather, the paper identifies conditions under which exact recovery is possible and quantifies how transport-reconstruction error propagates to directional crossing estimates. When counter-flow effects are relevant, additional local velocity statistics are introduced through a dispersion-aware closure. The baseline model concerns geometry-induced serving-region crossings. Radio measurements, hysteresis, time-to-trigger, target selection, and other protocol mechanisms governing executed HOs are outside the crossing model itself.

\vspace*{-1mm}
\subsection{Related Work}\label{subsec:related_work} % S1.1

Analytical studies of mobility in cellular networks commonly use stochastic geometry to relate user motion, association geometry, and HO statistics. For hexagonal and Poisson--Voronoi deployments, classical results established the dependence of the mean HO rate on UE speed and base-station (BS) density and derived corresponding cell residence and sojourn-time statistics \cite{Lin2013Mobility}. Subsequent extensions considered biased multi-tier association, general UE trajectories, network densification, and control-plane/user-plane separation \cite{Sadr2015Handoff,Bao2015Mobility,Ibrahim2016Dense,Salehi2021Sojourn}. Related analyses addressed movement-aware cooperation and mobility in ultra-dense and radio-frequency/visible-light-communication (RF/VLC) networks \cite{Sun2021MACH,Arshad2021RFVLC,Arshad2022Sojourn}, with a broader overview of mobility-aware stochastic geometry given in \cite{Tabassum2019Mobility}. The methodology has also been applied beyond terrestrial cellular networks, including satellite--aerial, cooperative satellite--unmanned-aerial-vehicle (UAV), and heterogeneous non-terrestrial systems \cite{10058140,9586571,DongJSAC,DongTCOM,teng2026communication,zhang2026resource,lin2024fedsn}. Despite the different network geometries considered, mobility in these models is generally specified through individual UE trajectories or prescribed mobility distributions rather than through a population field evolving jointly in space and time.

Beyond association geometry, HO models have incorporated radio propagation and mobility-management mechanisms. Multi-connectivity, blockage, beam management, and directional transmission have been considered in millimeter-wave, terahertz, and vehicular networks \cite{Ozkoc2021Multiconnectivity,Kalamkar2022Beam,Aghashahi2023Beam,lin2022tracking}, while more detailed association models have been used to characterize HO trigger regions, failures, and ping-pong behavior \cite{Wei2025IRSHO}. 
At the protocol and control level, previous studies have optimized hysteresis and time-to-trigger parameters, incorporated signal prediction and target-cell selection, and applied reinforcement learning and open radio access network (O-RAN) methods to mobility management~\cite{Huang2022Handover,Karmakar2023Mobility,Prado2023Mobility,Kalntis2026Meta,duan2026llm}.
 These models describe how radio measurements, association rules, and protocol parameters affect HO decisions and execution. Their primary state variables, however, remain user-level measurements, trajectories, or serving-cell decisions rather than the aggregate transport of a mobile population.

\begin{table*}[t]
\vspace*{-1mm}
	\centering
	\caption{Comparison of representative approaches with respect to the mobility and HO modeling aspects considered in this paper.}
	\label{tab:related_ho} % Tab.I
	\vspace*{-2mm}
	\footnotesize
	\renewcommand{\arraystretch}{1.20}
\setlength{\tabcolsep}{6.0pt}
%\begin{tabular*}{\textwidth}{@{\extracolsep{\fill}}p{6.7cm}cccc@{}}
\begin{tabular}{lcccc}
		\toprule
		Representative approaches &
		\shortstack{Directed cell-pair\\mobility} &
		\shortstack{Time-varying\\population field} &
		\shortstack{Net/gross\\distinction} &
		\shortstack{Density-to-HO\\reconstruction}\\
		\midrule
		
		Stochastic-geometry mobility and HO modeling
		\cite{Lin2013Mobility,Sadr2015Handoff,Bao2015Mobility,Ibrahim2016Dense,Salehi2021Sojourn,Sun2021MACH,Arshad2021RFVLC,Arshad2022Sojourn}
		&
		$\triangle$ &
		-- &
		-- &
		--\\
		
		Radio- and beam-aware HO modeling
		\cite{Ozkoc2021Multiconnectivity,Kalamkar2022Beam,Aghashahi2023Beam,Wei2025IRSHO}
		&
		$\triangle$ &
		-- &
		-- &
		--\\
		
		HO control, prediction, and learning
		\cite{Huang2022Handover,Karmakar2023Mobility,Prado2023Mobility,Kalntis2026Meta}
		&
		$\checkmark$ &
		-- &
		-- &
		--\\
		
		Empirical HO measurement and mobility analysis
		\cite{Demissie2013Handover,Kalntis2024Telco}
		&
		$\checkmark$ &
		$\triangle$ &
		-- &
		--\\
		
		F-STSG framework
		\cite{Dong2026FSTSG}
		&
		-- &
		$\checkmark$ &
		-- &
		--\\
		
		Field-based F-STSG applications
		\cite{Dong2026TMC,Dong2026VDR,DongFSAT}
		&
		-- &
		$\triangle$ &
		-- &
		--\\
		
		\midrule
		This work
		&
		$\checkmark$ &
		$\checkmark$ &
		$\checkmark$ &
		$\checkmark$\\
		
		\bottomrule
\end{tabular}
	
	\vspace{1mm}
	\textit{Note:} HO: handover; F-STSG: fluid spatiotemporal stochastic geometry. $\checkmark$: explicitly treated; $\triangle$: partially or indirectly treated; --: not explicitly addressed.
	\vspace*{-4mm}
\end{table*}

Large-scale measurement studies approach the problem from observed HO records rather than assumed mobility models. Cellular HO traces have been used to infer urban mobility patterns and relate inter-cell transitions to population activity \cite{Demissie2013Handover}. More recent countrywide measurements have characterized the spatial and temporal distributions of horizontal and vertical HOs and examined their relation to population distribution and network conditions \cite{Kalntis2024Telco,wang2026exploring}. Such measurements provide direct evidence of how association changes appear in operational networks and complement analytical models based on idealized mobility assumptions. In these studies, however, the HO event is already part of the observation. They do not determine how directional cell-pair crossing rates can be obtained from an evolving population field before individual HO events are observed, nor which components of the resulting crossing traffic are identifiable from aggregate population information.

F-STSG \cite{Dong2026FSTSG} starts from a different state description, representing a nonstationary network by a time-varying spatial density and a macroscopic transport field linked through a continuity equation. This field representation has subsequently been applied to flux-aware UAV infrastructure provisioning \cite{Dong2026TMC}, vorticity-based routing in ultra-dense networks \cite{Dong2026VDR}, and fluid-dynamic interference modeling for low-Earth-orbit (LEO) mega-constellations \cite{DongFSAT}. These studies describe network dynamics at the field level but do not formulate directional mobility-induced crossings between neighboring serving regions. Density evolution alone does not specify the physical population transfer between individual cell pairs, while first-order population transport does not, in general, determine the total crossing traffic generated by counter-moving users. The present work addresses these two distinctions by connecting microscopic boundary crossings with macroscopic population transport and by establishing when directional HO traffic can be reconstructed from successive density observations.

\begin{figure*}[b]
\vspace*{-5mm}
	\centering
	\includegraphics[width=0.75\textwidth]{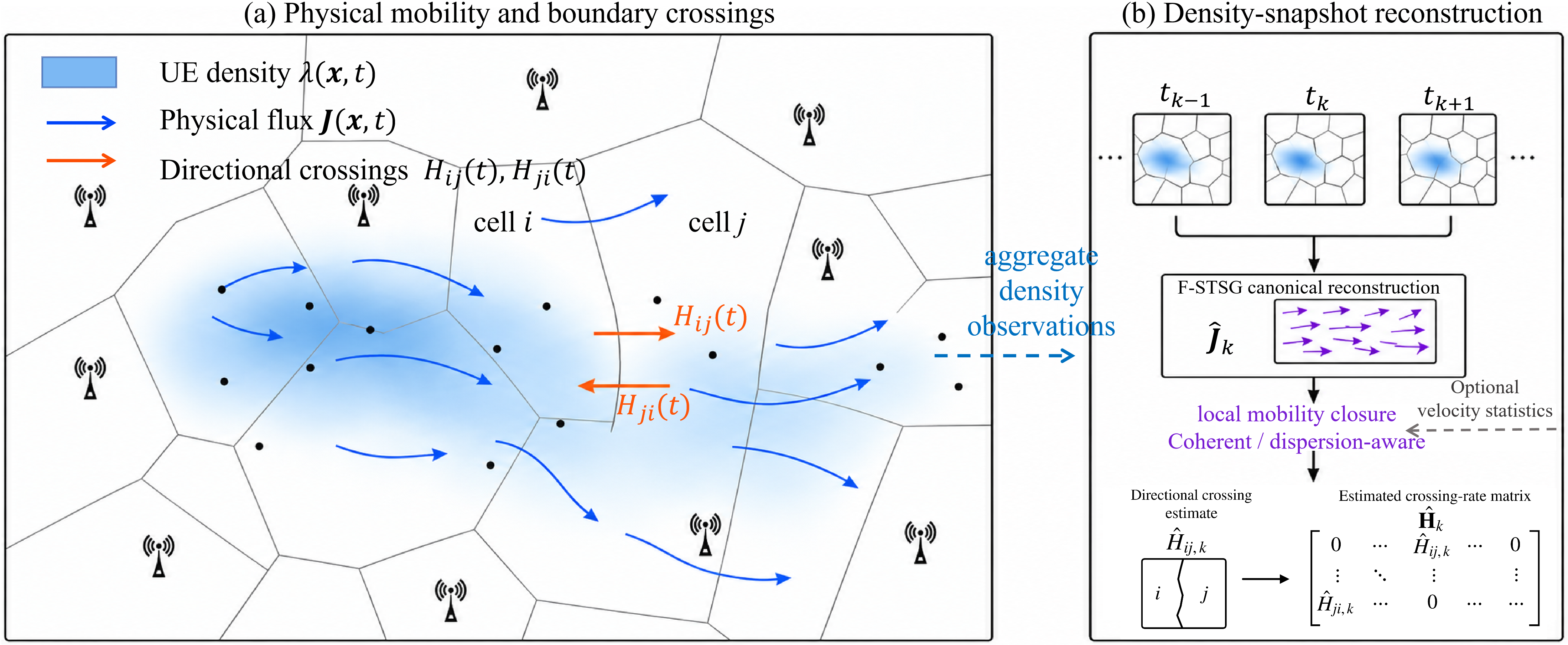}
	\vspace*{-3mm}
	\caption{System model for mobility-induced serving-region crossings and density-snapshot-based reconstruction. Physical population transport generates directional crossings between neighboring serving regions, whereas aggregate density observations are used to reconstruct first-order transport and estimate the corresponding crossing rates under a local mobility closure.}
	\label{fig:system_model} % Fig.1
	\vspace*{-1mm}
\end{figure*}

\vspace*{-2mm}
\subsection{Contributions and Organization}\label{subsec:contributions} % S1.2

The main contributions are summarized as follows.
\begin{itemize}
	\item We develop a kinetic formulation of directional serving-region crossings based on boundary-normal UE velocity moments. It distinguishes signed net population transfer from gross bidirectional crossings, preserves cell-level population balance, and recovers the classical stationary Poisson--Voronoi crossing-rate law as a special case.
	
	\item We identify two information limits: density evolution does not uniquely determine physical pairwise transport, and first-order transport does not generally determine gross crossings. We establish exact recovery conditions for potential transport and derive sharp distribution-free bounds on directional crossing rates given the first two normal-velocity moments and a bounded-speed constraint. Coherent and Gaussian mobility closures provide conditional estimates under additional mobility information.
	
	\item We connect the kinetic formulation with density-snapshot-based F-STSG reconstruction and establish an error bound accounting for reconstruction error, physical-flux ambiguity, auxiliary-statistic error, and closure mismatch. Controlled experiments examine these effects, while held-out vehicular simulations compare net-flow reconstruction against three baselines and evaluate the role of auxiliary velocity statistics.
\end{itemize}

Table~\ref{tab:related_ho} compares the proposed approach with the closely related studies in terms of mobility and HO modeling aspects.

The remainder of this paper is organized as follows. Section~\ref{sec:system_model} introduces the system model and directional crossing definitions. Section~\ref{sec:ho_characterization} establishes the kinetic relations, information limits, recovery conditions, and mobility closures. Section~\ref{sec:estimation} develops density-based reconstruction and directional crossing estimation. Section~\ref{sec:numerical_results} presents numerical results, and Section~\ref{sec:conclusion} concludes the paper.

\begin{table}[h]
\vspace*{-3mm}
	\centering
	\caption{Key notation.}
	\label{tab:key_notation} % Tab.II
	\vspace*{-2mm}
	\scriptsize
	\renewcommand{\arraystretch}{1.15}
	\setlength{\tabcolsep}{4.0pt}
	\begin{tabular}{ll}
		\toprule
		Symbol & Definition\\
		\midrule
		$\Omega$, $\mathcal{V}_i$
		& Service region and serving region of BS $i$\\
		$\Gamma_{ij}$, $\mathbf{n}_{ij}$
		& Common cell boundary and unit normal from $i$ to $j$\\
		$\lambda(\mathbf{x},t)$
		& Spatial UE population density\\
		$N_i(t)$
		& UE population in cell $i$\\
		$\mathbf{J}(\mathbf{x},t)$
		& Physical first-order population flux\\
		$\widehat{\mathbf{J}}(\mathbf{x},t)$
		& Canonical flux reconstructed from exact density evolution\\
		$U_{ij}$
		& UE velocity component normal to $\Gamma_{ij}$\\
		$H_{ij}(t)$
		& Directional crossing rate from cell $i$ to $j$\\
		$F_{ij}(t)$
		& Signed net flow, $H_{ij}-H_{ji}$\\
		$G_{ij}(t)$
		& Gross crossing rate, $H_{ij}+H_{ji}$\\
		$m_{ij}$, $s_{ij}^{2}$
		& Conditional mean and variance of normal velocity\\
		$G_{ij}^{\mathrm L}$, $G_{ij}^{\mathrm U}$
		& Lower and upper bounds on gross crossing rate\\
		$\lambda_k$, $q_k$
		& Density snapshot and corrected temporal derivative\\
		$\widehat{\mathbf{J}}_k$
		& Flux reconstructed from density snapshots\\
		$\widehat F_{ij,k}$, $\widehat H_{ij,k}$
		& Estimated signed net flow and directional crossing rate\\
		\bottomrule
	\end{tabular}
	\vspace*{-4mm}
\end{table}

\subsection{Notation}\label{S1.3}

Scalars are denoted by italic letters, vectors by bold lowercase letters, and sets or spatial regions by calligraphic letters. The operators $[\cdot]_+$, $\mathbb{E}[\cdot]$, $\nabla$, and $\nabla\!\cdot$ denote the positive part, expectation, gradient, and divergence, respectively. A hat denotes a reconstructed or estimated quantity, and the corresponding unhatted symbol denotes the reference quantity unless stated otherwise. The principal symbols used throughout the paper are summarized in Table~\ref{tab:key_notation}, while symbols used only in individual derivations are defined locally when introduced.

%\vspace*{-1mm}
%========================II=================================
\section{System Model and Problem Formulation}\label{sec:system_model} % S2

We consider mobility-induced association changes in a cellular network from two complementary descriptions. The physical population flux characterizes the first-order transport generated by UE motion, whereas F-STSG reconstructs a canonical transport field from aggregate density evolution. Directional crossings of neighboring serving regions are then defined from the local boundary-normal velocity distribution. Fig.~\ref{fig:system_model} summarizes the resulting model: physical UE motion generates serving-region crossings, while successive density observations are used to reconstruct first-order transport and, under an appropriate local mobility closure, estimate the corresponding crossing rates.

\vspace*{-1mm}
\subsection{Cellular Geometry and Population Transport}\label{subsec:cellular_mobility} % S2.1

Consider a bounded planar service region $\Omega\subset\mathbb{R}^{2}$ containing $M$ fixed BSs at distinct locations $\{\mathbf{b}_{i}\}_{i\in\mathcal{I}}$, where $\mathcal{I}=\{1,\ldots,M\}$. To isolate mobility-induced association changes from radio fluctuations and protocol filtering, the baseline model adopts nearest-BS association. The serving region of BS $i$ is
\begin{equation} % eq.1
	\mathcal{V}_{i}
	=
	\left\{
	\mathbf{x}\in\Omega:
	\|\mathbf{x}-\mathbf{b}_{i}\|_{2}
	\le
	\|\mathbf{x}-\mathbf{b}_{j}\|_{2},
	\ \forall j\in\mathcal{I}
	\right\}.
	\label{eq:voronoi_cell}
\end{equation}
The regions $\{\mathcal{V}_{i}\}_{i\in\mathcal{I}}$ form a partition of $\Omega$ up to their boundaries. Two cells are defined as neighbors when their common boundary has nonzero length. For neighboring cells $i$ and $j$, let
$\Gamma_{ij}=\partial\mathcal{V}_{i}\cap\partial\mathcal{V}_{j}\cap\Omega$
denote the common boundary and let $\mathbf{n}_{ij}$ denote the unit normal directed from $\mathcal{V}_{i}$ to $\mathcal{V}_{j}$. Hence,
$\Gamma_{ij}=\Gamma_{ji}$ and $\mathbf{n}_{ji}=-\mathbf{n}_{ij}$. The neighbor set of cell $i$ is denoted by $\mathcal{N}_{i}$.

Let $\lambda(\mathbf{x},t)\geq0$ denote the spatial UE population density at location $\mathbf{x}$ and time $t$. The UE population in cell $i$, $N_{i}(t)$, the local conditional mean velocity, $\mathbf{v}(\mathbf{x},t)$, and the corresponding physical population flux, $\mathbf{J}(\mathbf{x},t)$, are given respectively:
\begin{align} % eqs.2,3
	N_{i}(t)
	&=
	\int_{\mathcal{V}_{i}}
	\lambda(\mathbf{x},t)
	\,\mathrm{d}\mathbf{x},
	\label{eq:cell_population}\\
	\mathbf{v}(\mathbf{x},t)
	&=
	\mathbb{E}
	\left[
	\mathbf{U}
	\mid
	\mathbf{x},t
	\right],
	\qquad
	\mathbf{J}(\mathbf{x},t)
	=
	\lambda(\mathbf{x},t)\mathbf{v}(\mathbf{x},t),
	\label{eq:population_flux}
\end{align}
where $\mathbf{U}$ denotes the velocity of a randomly selected UE conditioned on location and time. The conditional velocity distribution is assumed to have a finite first moment. Thus, $\mathbf{J}$ retains the first-order population transport induced by the conditional mean velocity but not the complete distribution of individual UE velocities.

The baseline model considers population redistribution within $\Omega$ without physical arrivals or departures through its outer boundary. Assuming sufficient regularity for the differential and boundary operations, the density and physical flux satisfy
\begin{equation} % eq.4
	\begin{aligned}
		\partial_t\lambda(\mathbf{x},t)
		+\nabla\!\cdot\mathbf{J}(\mathbf{x},t)
		&=0,
		&&\mathbf{x}\in\Omega,\\
		\mathbf{J}(\mathbf{x},t)\cdot\mathbf{n}_{\Omega}(\mathbf{x})
		&=0,
		&&\mathbf{x}\in\partial\Omega.
	\end{aligned}
	\label{eq:continuity_physical}
\end{equation}
where $\mathbf{n}_{\Omega}$ denotes the outward unit normal on $\partial\Omega$. The no-flux boundary condition implies conservation of the total UE population in $\Omega$. Open boundaries or physical arrivals and departures can instead be represented through boundary fluxes or source terms but are not included in the baseline model.

Equation~\eqref{eq:continuity_physical} determines the divergence of the physical flux but not the complete field $\mathbf{J}$. F-STSG~\cite{Dong2026FSTSGKinematics} selects a canonical flux by minimizing the weighted kinetic energy $\int_{\Omega}\|\mathbf{j}\|^2/\lambda\,\mathrm{d}\mathbf{x}$ over admissible flux fields $\mathbf{j}$ satisfying the continuity equation and no-flux boundary condition. Under the regularity conditions of the weighted elliptic problem, the potential $\phi(\mathbf{x},t)$ and reconstructed flux $\widehat{\mathbf{J}}(\mathbf{x},t)$ satisfy
\begin{equation} % eq.5
	\begin{aligned}
		\nabla\!\cdot
		\left(
		\lambda(\mathbf{x},t)
		\nabla\phi(\mathbf{x},t)
		\right)
		&=
		\frac{\partial\lambda(\mathbf{x},t)}{\partial t},
		\qquad
		\mathbf{x}\in\Omega,\\
		\lambda(\mathbf{x},t)
		\nabla\phi(\mathbf{x},t)
		\cdot
		\mathbf{n}_{\Omega}(\mathbf{x})
		&=
		0,
		\qquad
		\mathbf{x}\in\partial\Omega,\\
		\widehat{\mathbf{J}}(\mathbf{x},t)
		&=
		-\lambda(\mathbf{x},t)
		\nabla\phi(\mathbf{x},t).
	\end{aligned}
	\label{eq:fstsg_reconstruction}
\end{equation}
For the closed system, the Neumann compatibility condition follows from conservation of the total population. The potential is determined up to an additive constant, which does not affect $\widehat{\mathbf{J}}$. Throughout the paper, $\mathbf{J}$ denotes the physical first-order population flux, whereas $\widehat{\mathbf{J}}$ denotes the F-STSG reconstruction inferred from density evolution. The two quantities need not coincide in general, even though both are consistent with the same population balance. In particular, density evolution alone does not determine the physical transport across each individual cell boundary.

\vspace*{-1mm}
\subsection{Boundary-Crossing Model}\label{subsec:boundary_model} % S3.1

Under nearest-BS association, a continuous UE trajectory changes serving region when it crosses the common boundary of two neighboring cells. We refer to this event as a geometry-induced serving-region crossing and denote its directional rate from cell $i$ to cell $j$ by $H_{ij}(t)$. This rate characterizes mobility-induced association crossings rather than protocol-level HO execution, which may additionally depend on radio measurements and mobility-management parameters.

For neighboring cells $i$ and $j$, define the boundary-normal component of the local UE velocity as
\begin{equation} % eq.6
	U_{ij}
	=
	\mathbf{U}^{\mathsf T}\mathbf{n}_{ij}.
	\label{eq:normal_velocity}
\end{equation}
A positive value of $U_{ij}$ represents motion from $\mathcal{V}_{i}$ toward $\mathcal{V}_{j}$, whereas a negative value represents motion in the reverse direction. Assuming continuous trajectories and an integrable first normal-velocity moment along $\Gamma_{ij}$, the directional crossing rate is
\begin{equation} % eq.7
	H_{ij}(t)
	\!=\!\!
	\int_{\Gamma_{ij}}  \!\!\!\!\!
	\lambda(\mathbf{x},t)
	\mathbb{E}
	\left[ 
	[U_{ij}]_{+} \!
	\mid \!
	\mathbf{x},t
	\right]
	\,\mathrm{d}\ell,
	\,
	[a]_{+}\! =\! \max\{a,0\},
	\label{eq:directional_ho_rate}
\end{equation}
where $\mathrm{d}\ell$ denotes the line element along $\Gamma_{ij}$. Reversing the boundary normal gives the corresponding rate $H_{ji}$. Equation~\eqref{eq:directional_ho_rate} has units of crossings per unit time and depends only on the velocity component normal to the serving-region boundary. Tangential motion does not contribute to the crossing rate. Intersections shared by three or more Voronoi cells have zero line measure and do not contribute to the boundary integral.

The problem considered in this paper is to determine which directional crossing quantities can be characterized or estimated from aggregate population observations without reconstructing individual UE trajectories. The physical first-order flux $\mathbf{J}$ contains only the conditional mean velocity in \eqref{eq:population_flux}, whereas \eqref{eq:directional_ho_rate} depends on a one-sided moment of the boundary-normal velocity distribution. Moreover, density evolution does not, in general, uniquely determine the physical pairwise transport represented by $\mathbf{J}$. The next section therefore separates the information contained in density evolution, first-order population transport, and unresolved local mobility, and identifies the conditions under which directional crossing rates can be recovered.

\vspace*{-1mm}
%========================III=================================
\section{Kinetic Handover Characterization}\label{sec:ho_characterization} % S3

The directional crossing rate in \eqref{eq:directional_ho_rate} depends on both first-order population transport and the local distribution of boundary-normal UE velocities. This section first relates directional crossings to the physical population flux and then identifies two distinct information limitations. The first concerns the recovery of physical pairwise transport from density evolution, whereas the second concerns the recovery of gross crossing traffic from first-order transport. Conditions under which these ambiguities can be resolved are then established, followed by a stationary isotropic limit that recovers the classical Poisson--Voronoi HO result.

\vspace*{-1mm}
\subsection{Net and Gross Crossing Flows}\label{subsec:net_gross} % S3.1

The two directional crossing rates associated with a common cell boundary contain different information about population transport. Define the total outgoing and incoming crossing rates of cell $i$ respectively as
\begin{equation} % eq.8
	H_i^{\mathrm{out}}(t)
	\triangleq
	\sum_{j\in\mathcal{N}_i}H_{ij}(t),
	\qquad
	H_i^{\mathrm{in}}(t)
	\triangleq
	\sum_{j\in\mathcal{N}_i}H_{ji}(t).
	\label{eq:cell_in_out_rates}
\end{equation}

\begin{Proposition} 
	\label{prop:boundary_flow_decomposition}
	For any neighboring cells $i$ and $j$ satisfying the assumptions of \eqref{eq:directional_ho_rate}, define the signed net flow $F_{ij}(t)$ and gross crossing rate $G_{ij}(t)$ by 
	\begin{equation} % eq.9
		\begin{aligned}
			F_{ij}(t)
			\!&\triangleq\!
			H_{ij}(t)\!-\! H_{ji}(t)
			\!=\!
			\int_{\Gamma_{ij}}
			\mathbf{J}(\mathbf{x},t)
			\cdot
			\mathbf{n}_{ij}
			\,\mathrm{d}\ell,\\
			G_{ij}(t)
			\!&\triangleq\!
			H_{ij}(t)\!+\!H_{ji}(t)
			\!=\!
			\int_{\Gamma_{ij}}
			\lambda(\mathbf{x},t)
			\mathbb{E}
			\left[
			|U_{ij}|
			\mid
			\mathbf{x},t
			\right]
			\,\mathrm{d}\ell.
		\end{aligned}
		\label{eq:net_gross_flows}
	\end{equation}
	Then
	\begin{equation} % eq.10
		G_{ij}(t)
		\geq
		\int_{\Gamma_{ij}}
		\left|
		\mathbf{J}(\mathbf{x},t)
		\cdot
		\mathbf{n}_{ij}
		\right|
		\,\mathrm{d}\ell
		\geq
		|F_{ij}(t)|,
		\label{eq:gross_lower_bound}
	\end{equation}
	and the cell population satisfies
	\begin{equation} % eq.11
		\frac{\mathrm{d}N_i(t)}{\mathrm{d}t}
		=
		H_i^{\mathrm{in}}(t)
		-
		H_i^{\mathrm{out}}(t)
		=
		-\sum_{j\in\mathcal{N}_i}
		F_{ij}(t).
		\label{eq:cell_conservation}
	\end{equation}
\end{Proposition}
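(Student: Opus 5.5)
The argument reduces to pointwise identities for the positive and negative parts of the normal velocity, followed by integration along $\Gamma_{ij}$ and the divergence theorem on each cell.

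\textbf{Pointwise identities.} Since $\mathbf{n}_{ji}=-\mathbf{n}_{ij}$, we have $U_{ji}=-U_{ij}$, hence $[U_{ji}]_+=[U_{ij}]_-$ with $[a]_-=\max\{-a,0\}$. For every real $a$,
\begin{equation*}
[a]_+-[a]_-=a, \qquad [a]_++[a]_-=|a|.
\end{equation*}
Both one-sided moments are finite by the finite-first-moment assumption, so linearity of conditional expectation applies. Taking conditional expectations at $(\mathbf{x},t)$ gives
\begin{equation*}
\mathbb{E}\big[[U_{ij}]_+\mid\mathbf{x},t\big]-\mathbb{E}\big[[U_{ji}]_+\mid\mathbf{x},t\big]=\mathbf{v}(\mathbf{x},t)\cdot\mathbf{n}_{ij},
\end{equation*}
and the analogous identity with $|U_{ij}|$ for the sum. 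Multiplying by $\lambda$ and integrating over $\Gamma_{ij}=\Gamma_{ji}$ with \eqref{eq:directional_ho_rate} and \eqref{eq:population_flux} yields both expressions in \eqref{eq:net_gross_flows}.

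\textbf{The two inequalities.} The first inequality in \eqref{eq:gross_lower_bound} is conditional Jensen, $\mathbb{E}[|U_{ij}|\mid\mathbf{x},t]\ge|\mathbb{E}[U_{ij}\mid\mathbf{x},t]|=|\mathbf{v}\cdot\mathbf{n}_{ij}|$. Multiply by $\lambda\ge0$ and integrate along $\Gamma_{ij}$. The second inequality is the triangle inequality for integrals, $\big|\int_{\Gamma_{ij}}\mathbf{J}\cdot\mathbf{n}_{ij}\,\mathrm{d}\ell\big|\le\int_{\Gamma_{ij}}|\mathbf{J}\cdot\mathbf{n}_{ij}|\,\mathrm{d}\ell$.

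\textbf{Cell balance.} Differentiate \eqref{eq:cell_population} under the integral, which is permitted under the stated regularity since $\mathcal{V}_i$ is fixed. Then use \eqref{eq:continuity_physical} and the divergence theorem on the convex polygon $\mathcal{V}_i$:
\begin{equation*}
\frac{\mathrm{d}N_i}{\mathrm{d}t}=-\int_{\mathcal{V}_i}\nabla\!\cdot\mathbf{J}\,\mathrm{d}\mathbf{x}=-\int_{\partial\mathcal{V}_i}\mathbf{J}\cdot\mathbf{n}_{\mathrm{out}}\,\mathrm{d}\ell.
\end{equation*}
Decompose $\partial\mathcal{V}_i$ up to a null set as the union of the $\Gamma_{ij}$, $j\in\mathcal{N}_i$, and $\partial\mathcal{V}_i\cap\partial\Omega$.
\begin{itemize}
\item On $\partial\mathcal{V}_i\cap\partial\Omega$, the no-flux condition kills the contribution.
\item Vertices, and shared pieces of zero length with non-neighbors, have zero line measure.
\item On $\Gamma_{ij}$, the outward normal of $\mathcal{V}_i$ is $\mathbf{n}_{ij}$, so each piece contributes $F_{ij}$.
\end{itemize}
This gives $-\sum_j F_{ij}$. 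Finally,
\begin{equation*}
-\sum_j F_{ij}=\sum_j(H_{ji}-H_{ij})=H_i^{\mathrm{in}}-H_i^{\mathrm{out}}
\end{equation*}
by \eqref{eq:cell_in_out_rates}.

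\textbf{Main obstacle.} The main obstacle is technical rather than conceptual: justifying the trace of $\mathbf{J}$ on $\Gamma_{ij}$ and the divergence theorem on a Lipschitz polygonal cell, and verifying that the boundary decomposition is exhaustive. I would handle this by invoking the standing regularity assumption, under which $\mathbf{J}$ is continuous up to the boundary and $\lambda$ is $C^1$ in $t$. I would also note that Voronoi cells are convex, so $\mathcal{V}_i$ is Lipschitz and the divergence theorem applies.
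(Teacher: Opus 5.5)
Your proposal is correct and follows essentially the same route as the paper. It uses the identities $[a]_+-[-a]_+=a$ and $[a]_++[-a]_+=|a|$ with $U_{ji}=-U_{ij}$, the pointwise bound $|\mathbb{E}[U_{ij}\mid\mathbf{x},t]|\le\mathbb{E}[|U_{ij}|\mid\mathbf{x},t]$ followed by the triangle inequality, and the divergence theorem on $\mathcal{V}_i$ with the outer no-flux condition. One small slip: $\mathcal{V}_i$ is a Voronoi cell clipped to $\Omega$, so it is convex (hence Lipschitz) only when $\Omega$ is convex, and a polygon only when $\Omega$ is polygonal; for general $\Omega$ you must rely on the standing regularity assumption, exactly as the paper does.
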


\begin{proof}
	Because $\mathbf{n}_{ji}=-\mathbf{n}_{ij}$, we have $U_{ji}=-U_{ij}$. The identities $[a]_+-[-a]_+=a$ and $[a]_++[-a]_+=|a|$, together with $\mathbf{J}=\lambda\mathbb{E}[\mathbf{U}\mid\mathbf{x},t]$, give \eqref{eq:net_gross_flows}. The pointwise inequality
	$\left|\mathbb{E}[U_{ij}\mid\mathbf{x},t]\right|
	\leq
	\mathbb{E}[|U_{ij}|\mid\mathbf{x},t]$
	gives the first inequality in \eqref{eq:gross_lower_bound}, and the triangle inequality gives the second. Integrating \eqref{eq:continuity_physical} over $\mathcal{V}_i$ and applying the divergence theorem, with the outer no-flux condition where $\partial\mathcal{V}_i$ intersects $\partial\Omega$, yields \eqref{eq:cell_conservation}.
\end{proof}

The quantity $F_{ij}$ is the signed net population transfer from cell $i$ to cell $j$, whereas $G_{ij}$ is the gross crossing rate across their common boundary. By construction,
$F_{ji}=-F_{ij}$ and $G_{ji}=G_{ij}$, and
\begin{equation} % eq.12
	H_{ij}
	=
	\frac{G_{ij}+F_{ij}}{2},
	\qquad
	H_{ji}
	=
	\frac{G_{ij}-F_{ij}}{2}.
	\label{eq:directional_from_net_gross}
\end{equation}
Equality in the first inequality of \eqref{eq:gross_lower_bound} holds when the conditional normal velocity has one sign almost surely at almost every point of $\Gamma_{ij}$. Equality in the second additionally requires the mean normal flux to have a common sign along the boundary.

When only the density evolution is available, the physical flux $\mathbf{J}$ is not directly observed. Let the reconstructed pairwise flow associated with \eqref{eq:fstsg_reconstruction} be
\begin{equation} % eq.13
	\widehat F_{ij}(t)
	\triangleq
	\int_{\Gamma_{ij}}
	\widehat{\mathbf{J}}(\mathbf{x},t)
	\cdot
	\mathbf{n}_{ij}
	\,\mathrm{d}\ell.
	\label{eq:reconstructed_pairwise_flow}
\end{equation}

\begin{Proposition} 
	\label{prop:pairwise_transport_ambiguity}
	For a given density evolution and the same outer no-flux condition,
	\begin{equation} % eq.14
		\sum_{j\in\mathcal{N}_i}
		\widehat F_{ij}(t)
		=
		\sum_{j\in\mathcal{N}_i}
		F_{ij}(t)
		=
		-\frac{\mathrm{d}N_i(t)}{\mathrm{d}t}.
		\label{eq:reconstructed_flux_balance}
	\end{equation}
	The individual physical pairwise flows are not uniquely determined by the density evolution in general. In particular, let $\mathbf{C}(\mathbf{x},t)$ be a sufficiently regular field such that $\mathbf{J}+\mathbf{C}$ remains an admissible population flux and
	\begin{equation} % eq.15
		\nabla\!\cdot\mathbf{C}(\mathbf{x},t)=0,
		\qquad
		\mathbf{C}(\mathbf{x},t)
		\cdot
		\mathbf{n}_{\Omega}(\mathbf{x})=0,
		\quad
		\mathbf{x}\in\partial\Omega.
		\label{eq:divergence_free_component}
	\end{equation}
	Then $\mathbf{J}+\mathbf{C}$ produces the same density evolution and outer boundary flux as $\mathbf{J}$, while its flow across $\Gamma_{ij}$ changes by
	\begin{equation} % eq.16
		\int_{\Gamma_{ij}}
		\mathbf{C}(\mathbf{x},t)
		\cdot
		\mathbf{n}_{ij}
		\,\mathrm{d}\ell.
		\label{eq:pairwise_ambiguity}
	\end{equation}
\end{Proposition}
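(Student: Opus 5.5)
The plan is to integrate the reconstruction equation \eqref{eq:fstsg_reconstruction} over a single cell and compare it with Proposition~\ref{prop:boundary_flow_decomposition}, and then to handle the non-uniqueness claim by a direct computation with the divergence-free perturbation $\mathbf{C}$.

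\textbf{Step 1: the balance identity \eqref{eq:reconstructed_flux_balance}.}
\begin{itemize}
\item The reconstructed flux satisfies $\partial_t\lambda+\nabla\!\cdot\widehat{\mathbf{J}}=0$ in $\Omega$, which is the first line of \eqref{eq:fstsg_reconstruction} with $\widehat{\mathbf{J}}=-\lambda\nabla\phi$. It also satisfies $\widehat{\mathbf{J}}\cdot\mathbf{n}_\Omega=0$ on $\partial\Omega$, by the second line.
\item Integrate this equation over $\mathcal{V}_i$ and apply the divergence theorem on the Lipschitz polygon $\mathcal{V}_i$.
\item Split $\partial\mathcal{V}_i$ into three parts: the pieces on $\partial\Omega$, where the flux vanishes; the common boundaries $\Gamma_{ij}$ with outward normal $\mathbf{n}_{ij}$, for $j\in\mathcal{N}_i$; and the remaining points (Voronoi vertices and degenerate contacts), which have zero length.
\item This gives $\int_{\mathcal{V}_i}\partial_t\lambda\,\mathrm{d}\mathbf{x}=-\sum_{j\in\mathcal{N}_i}\widehat F_{ij}$.
\item Since the cells are fixed in time, differentiating under the integral in \eqref{eq:cell_population} gives $\mathrm{d}N_i/\mathrm{d}t=\int_{\mathcal{V}_i}\partial_t\lambda\,\mathrm{d}\mathbf{x}$.
\item Combining this with \eqref{eq:cell_conservation} from Proposition~\ref{prop:boundary_flow_decomposition} yields both equalities in \eqref{eq:reconstructed_flux_balance}.
\item The only point of care is the regularity needed for the divergence theorem and for differentiating under the integral. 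I would take this directly from the standing assumptions stated after \eqref{eq:continuity_physical} and \eqref{eq:fstsg_reconstruction}.
\end{itemize}

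\textbf{Step 2: the perturbed flux.}
\begin{itemize}
\item By linearity, $\partial_t\lambda+\nabla\!\cdot(\mathbf{J}+\mathbf{C})=\partial_t\lambda+\nabla\!\cdot\mathbf{J}=0$.
\item On the outer boundary, $(\mathbf{J}+\mathbf{C})\cdot\mathbf{n}_\Omega=0$.
\item So $\mathbf{J}+\mathbf{C}$ satisfies \eqref{eq:continuity_physical} with the same $\lambda$, meaning it is consistent with the same density evolution and the same outer boundary flux.
\item Its flow across $\Gamma_{ij}$ is $F_{ij}+\int_{\Gamma_{ij}}\mathbf{C}\cdot\mathbf{n}_{ij}\,\mathrm{d}\ell$, which is exactly \eqref{eq:pairwise_ambiguity}.
\end{itemize}

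\textbf{Step 3: the ``not uniquely determined in general'' claim.} It remains to exhibit one admissible $\mathbf{C}$ whose flow across some $\Gamma_{ij}$ is nonzero. I expect this to be the main obstacle, since it is the only non-routine part.
\begin{itemize}
\item Take $\mathbf{C}=\nabla^\perp\psi=(\partial_2\psi,-\partial_1\psi)$, where $\psi$ is a smooth stream function compactly supported in the interior of $\Omega$. Then $\nabla\!\cdot\mathbf{C}=0$ and $\mathbf{C}\cdot\mathbf{n}_\Omega=0$ automatically.
\item Choose a Voronoi vertex $\mathbf{p}$ in the interior of $\Omega$, shared by cells $i$, $j$, $k$. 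Let $\psi$ equal $1$ near $\mathbf{p}$ and vanish outside a small disc around $\mathbf{p}$ that does not meet any other edge.
\item Parametrize $\Gamma_{ij}$ from $\mathbf{p}$ outward. The flux of $\nabla^\perp\psi$ across a curve is the difference of $\psi$ at its endpoints, up to the orientation sign. Hence $\int_{\Gamma_{ij}}\mathbf{C}\cdot\mathbf{n}_{ij}\,\mathrm{d}\ell=\pm(\psi(\mathbf{p})-0)=\pm1\neq0$.
\item This is a circulation around the vertex: it adds equal flow around the cycle $i\to j\to k\to i$ and leaves every cell balance unchanged, consistent with Step 1.
\end{itemize}

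\textbf{Step 4: admissibility.} The perturbed flux must remain a physical population flux; I take this to mean compatibility with $\lambda\ge0$, i.e.\ $\mathbf{J}+\mathbf{C}$ should vanish where $\lambda$ does.
\begin{itemize}
\item If $\lambda>0$ on the support of $\psi$, replace $\psi$ by $\varepsilon\psi$. The resulting flux is $\lambda(\mathbf{v}+\varepsilon\mathbf{C}/\lambda)$, a bona fide mean-velocity flux, and the pairwise flow still changes by $\pm\varepsilon\neq0$.
\item If no interior vertex exists, for example a two-cell network, use instead a $\psi$ that takes different values at the two endpoints of $\Gamma_{ij}$ on $\partial\Omega$. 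For this to leave $\mathbf{C}\cdot\mathbf{n}_\Omega=0$, $\psi$ must be locally constant along $\partial\Omega$, which forces different constants on different boundary components. This is possible only in non-simply-connected domains.
\item I would therefore state the genericity claim for configurations with an interior vertex, which is the typical case the paper considers.
\end{itemize}
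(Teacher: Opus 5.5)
Your Steps 1 and 2 are the paper's proof. It likewise integrates the continuity equations for $\mathbf{J}$ and $\widehat{\mathbf{J}}$ over $\mathcal{V}_i$ using the divergence theorem and the outer no-flux condition. It then observes that $\mathbf{C}$ changes neither the divergence nor the outer normal flux, so the only effect on $\Gamma_{ij}$ is the integral \eqref{eq:pairwise_ambiguity}. The paper stops there. It never shows that this integral can be nonzero for an admissible $\mathbf{C}$, so its ``not uniquely determined in general'' claim rests on the formula and on the rigid-rotation and stream-function experiments of Section~\ref{subsec:information_limits_results}. Your Steps 3 and 4 supply the missing rigorous witness. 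The compactly supported stream function around an interior vertex $\mathbf{p}$ gives edge flux $\pm\psi(\mathbf{p})=\pm1$ and stays admissible when $\lambda>0$ on its support. It also shows that the ambiguity is a local circulation $i\to j\to k\to i$ that no cell balance can detect.

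One aside in Step 4 is wrong, although your final conclusion survives it. In a two-cell network, your own Step 1 already gives $F_{12}=-\mathrm{d}N_1/\mathrm{d}t$, so no admissible $\mathbf{C}$ can change $F_{12}$, whatever the topology of $\Omega$. For example, if two cells split an annulus, $\Gamma_{12}$ has two components, and a circulation around the hole crosses them in opposite senses, so the contributions cancel. More generally, whenever the neighbor graph is a tree, removing leaf cells one at a time shows that the balances \eqref{eq:reconstructed_flux_balance} determine every $F_{ij}$. Your hole idea does work, but only with at least three cells around the hole, for instance three annular sectors. Restricting the claim to configurations with an interior vertex, as you do, is therefore the right formulation.
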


\begin{proof}
	Both $\mathbf{J}$ and $\widehat{\mathbf{J}}$ satisfy the same continuity equation and outer no-flux condition. Integrating their divergences over $\mathcal{V}_i$ and applying the divergence theorem gives \eqref{eq:reconstructed_flux_balance}. A field satisfying \eqref{eq:divergence_free_component} changes neither the divergence of the transport field nor its normal component on the outer boundary. Hence, provided that $\mathbf{J}+\mathbf{C}$ remains admissible, it produces the same density evolution. Its contribution to the flow across an individual common boundary is exactly \eqref{eq:pairwise_ambiguity}.
\end{proof}

Proposition~\ref{prop:pairwise_transport_ambiguity} separates cell population balance from physical pairwise transport. Density evolution fixes the former, but divergence-free transport components can alter the latter without changing the observed density evolution.

The pairwise transport ambiguity is resolved for the following potential-transport class.

\begin{Corollary}
	\label{cor:potential_exact_recovery}
	Assume that the physical population flux admits the representation
	\begin{equation} % eq.17
		\mathbf{J}(\mathbf{x},t)
		=
		-\lambda(\mathbf{x},t)
		\nabla\psi(\mathbf{x},t),
		\label{eq:physical_potential_transport}
	\end{equation}
	satisfies the same outer no-flux condition as \eqref{eq:fstsg_reconstruction}, and assume that the corresponding weighted Neumann problem admits a solution that is unique up to an additive constant. Then
	\begin{equation} % eq.18
		\widehat{\mathbf{J}}(\mathbf{x},t)
		=
		\mathbf{J}(\mathbf{x},t).
		\label{eq:potential_exact_recovery}
	\end{equation}
	Consequently, $\widehat F_{ij}(t)=F_{ij}(t)$ for every neighboring cell pair.
\end{Corollary}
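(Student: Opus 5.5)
The plan is to show that the physical potential $\psi$ solves the same weighted Neumann problem as the F-STSG potential $\phi$ in \eqref{eq:fstsg_reconstruction}. The assumed uniqueness up to an additive constant then forces $\nabla\psi=\nabla\phi$, and hence $\mathbf{J}=\widehat{\mathbf{J}}$.

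\textbf{Step 1: interior equation.} Substituting \eqref{eq:physical_potential_transport} into the continuity equation \eqref{eq:continuity_physical} gives
\begin{equation*}
	\partial_t\lambda-\nabla\!\cdot(\lambda\nabla\psi)=0 .
\end{equation*}
This has the same form as the first line of \eqref{eq:fstsg_reconstruction}, $\nabla\!\cdot(\lambda\nabla\phi)=\partial_t\lambda$, but the sign differs, and this must be checked carefully. With $\widehat{\mathbf{J}}=-\lambda\nabla\phi$, the continuity equation $\partial_t\lambda+\nabla\!\cdot\widehat{\mathbf{J}}=0$ actually requires $\nabla\!\cdot(\lambda\nabla\phi)=\partial_t\lambda$. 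The physical equation yields exactly the same relation, $\nabla\!\cdot(\lambda\nabla\psi)=\partial_t\lambda$. So the two equations are consistent, and $\psi$ satisfies the same elliptic equation as $\phi$ with the same right-hand side.

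\textbf{Step 2: boundary condition.} The outer no-flux condition $\mathbf{J}\cdot\mathbf{n}_\Omega=0$ gives $\lambda\nabla\psi\cdot\mathbf{n}_\Omega=0$ on $\partial\Omega$. This is the Neumann condition in \eqref{eq:fstsg_reconstruction}. Hence $\psi$ and $\phi$ both solve the weighted Neumann problem with the same data.

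\textbf{Step 3: uniqueness and consequence.} By the assumed uniqueness up to an additive constant, $\psi=\phi+c(t)$. Therefore $\nabla\psi=\nabla\phi$, which gives \eqref{eq:potential_exact_recovery}.

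An alternative route avoids quoting uniqueness and makes clear where it comes from. Take the difference $w=\psi-\phi$, which satisfies $\nabla\!\cdot(\lambda\nabla w)=0$ with $\lambda\nabla w\cdot\mathbf{n}_\Omega=0$. Testing against $w$ and integrating by parts gives
\begin{equation*}
	\int_\Omega\lambda\|\nabla w\|^2\,\mathrm{d}\mathbf{x}=0 .
\end{equation*}
Hence $\lambda\nabla w=0$, and therefore $\widehat{\mathbf{J}}=\mathbf{J}$ wherever the fluxes are defined, even on regions where $\lambda$ vanishes.

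Finally, integrating $\widehat{\mathbf{J}}\cdot\mathbf{n}_{ij}=\mathbf{J}\cdot\mathbf{n}_{ij}$ along $\Gamma_{ij}$, and using \eqref{eq:net_gross_flows} together with \eqref{eq:reconstructed_pairwise_flow}, gives $\widehat F_{ij}=F_{ij}$.

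The main obstacle is the regularity needed to justify the integration by parts and the uniqueness claim, particularly when $\lambda$ degenerates. The corollary simply assumes this, so I would invoke the assumption rather than prove it.
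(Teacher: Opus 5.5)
Your Steps 1--3 follow the paper's proof exactly. Substituting $\mathbf{J}=-\lambda\nabla\psi$ into the continuity equation and the no-flux condition shows that $\psi$ solves the same weighted Neumann problem as $\phi$, and the assumed uniqueness up to an additive constant gives $\nabla\psi=\nabla\phi$, hence $\widehat{\mathbf{J}}=\mathbf{J}$ and $\widehat F_{ij}=F_{ij}$. The sign discrepancy you flag in Step 1 turns out not to exist, as you verify. Your energy-identity argument, which gives $\lambda\nabla(\psi-\phi)=0$ directly, is a valid optional supplement under the same regularity assumptions rather than a required step.
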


\begin{proof}
	Substituting \eqref{eq:physical_potential_transport} into \eqref{eq:continuity_physical} gives
	\[
	\nabla\!\cdot
	\left[
	\lambda(\mathbf{x},t)
	\nabla\psi(\mathbf{x},t)
	\right]
	=
	\frac{\partial\lambda(\mathbf{x},t)}{\partial t}.
	\]
	The physical no-flux condition also gives
	$\lambda\nabla\psi\cdot\mathbf{n}_{\Omega}=0$
	on $\partial\Omega$. Hence, $\psi$ and $\phi$ satisfy the same weighted Neumann problem. Uniqueness up to an additive constant gives $\nabla\psi=\nabla\phi$, and therefore $\mathbf{J}=\widehat{\mathbf{J}}$.
\end{proof}

The preceding result identifies a class for which the F-STSG reconstruction coincides with the physical first-order transport. Outside this class, agreement of cell population balance does not imply agreement of individual pairwise flows.

A second information limitation remains even when the physical first-order population flux is known.

\begin{Proposition} 
	\label{prop:gross_crossing_nonidentifiability}
	Suppose that $\Gamma_{ij}$ contains a subset of positive line measure on which $\lambda(\mathbf{x},t)>0$. Then the local density $\lambda$ and physical first-order population flux $\mathbf{J}$ do not, in general, uniquely determine the gross crossing rate $G_{ij}$.
\end{Proposition}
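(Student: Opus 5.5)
The proof is a counterexample: we exhibit two admissible conditional velocity laws that give the same density $\lambda$ and the same first-order flux $\mathbf{J}$ everywhere in $\Omega$, but different gross crossing rates $G_{ij}$. Since $\lambda$ and $\mathbf{J}=\lambda\,\mathbb{E}[\mathbf{U}\mid\mathbf{x},t]$ depend on the velocity law only through its conditional mean, it is enough to change the conditional law of $\mathbf{U}$ while keeping its mean fixed. The continuity equation \eqref{eq:continuity_physical} and the no-flux condition involve only $\lambda$ and $\mathbf{J}$, so they are unaffected. By contrast, $G_{ij}$ in \eqref{eq:net_gross_flows} depends on $\mathbb{E}[|U_{ij}|\mid\mathbf{x},t]$, which is not a function of the mean.

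Fix any admissible state $(\lambda,\mathbf{v})$ with $\mathbf{v}=\mathbb{E}[\mathbf{U}\mid\mathbf{x},t]$. Let $A\subset\Gamma_{ij}$ be the set of positive line measure on which $\lambda>0$.

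\textbf{Law 1 (deterministic).} Take $\mathbf{U}=\mathbf{v}(\mathbf{x},t)$. Then $\mathbb{E}[|U_{ij}|\mid\mathbf{x},t]=|\mathbf{v}\cdot\mathbf{n}_{ij}|$. This law attains the first lower bound in \eqref{eq:gross_lower_bound} with equality:
\begin{equation*}
G_{ij}^{(1)}=\int_{\Gamma_{ij}}|\mathbf{J}\cdot\mathbf{n}_{ij}|\,\mathrm{d}\ell .
\end{equation*}

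\textbf{Law 2 (symmetric counter-flow).} Choose a bounded measurable $c(\mathbf{x},t)\ge 0$ that is positive on $A$. Take
\begin{equation*}
\mathbf{U}=\mathbf{v}+\epsilon\,c\,\mathbf{n}_{ij},
\end{equation*}
where $\epsilon=\pm1$ with probability $1/2$ each, independent of everything else. The conditional mean is unchanged, so $\lambda$, $\mathbf{J}$ and the dynamics are identical to Law 1. Writing $a=\mathbf{v}\cdot\mathbf{n}_{ij}$, the normal component has conditional mean absolute value
\begin{equation*}
\tfrac12\big(|a+c|+|a-c|\big)=\max\{|a|,c\}.
\end{equation*}
This is strictly larger than $|a|$ wherever $c>|a|$.

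If we choose $c>|a|$ on $A$, for example $c=|a|+1$ there, then
\begin{equation*}
G_{ij}^{(2)}-G_{ij}^{(1)}=\int_A \lambda\big(c-|a|\big)\,\mathrm{d}\ell>0 .
\end{equation*}
Strict positivity holds because $\lambda>0$ on a set of positive measure and the integrand is strictly positive there. So $G_{ij}$ differs between two laws that share $(\lambda,\mathbf{J})$, which is the claimed non-identifiability.

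The delicate step is admissibility of Law 2. The velocity perturbation is defined only as a conditional law at each $(\mathbf{x},t)$, so I will take the model's primitive object to be the family of conditional velocity distributions, as in \eqref{eq:population_flux}. Under that reading, only the mean constraint and the finite-first-moment requirement matter. Since $c$ is bounded, the perturbed law has a finite first moment, and the integrability assumption behind \eqref{eq:directional_ho_rate} still holds on $\Gamma_{ij}$. If a trajectory-level realization is also desired, I would remark that the perturbation can be confined to a thin neighbourhood of $\Gamma_{ij}$ as symmetric two-stream motion, whose net contribution to the flux is zero. I would keep this only as a remark and rest the proof on the conditional-law formulation.

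Finally, I would note that the directional rates $H_{ij}$ and $H_{ji}$ are then also non-identifiable, by \eqref{eq:directional_from_net_gross}. This motivates the moment bounds and mobility closures that follow in the paper.
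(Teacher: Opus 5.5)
Your proposal is correct and follows the paper's route. Both build a counterexample by changing the conditional normal-velocity law on a positive-measure, positive-density part of $\Gamma_{ij}$ without changing its mean, so $\lambda$ and $\mathbf{J}$ are untouched while $\mathbb{E}[|U_{ij}|\mid\mathbf{x},t]$, and hence $G_{ij}$, changes.

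The difference is in the choice of laws. The paper compares two zero-mean laws $\operatorname{Unif}[-a,a]$ and $\operatorname{Unif}[-b,b]$, so it tacitly works at a state with vanishing normal mean on that subset. Your deterministic versus symmetric two-point comparison about an arbitrary mean $\mathbf{v}$ gives non-identifiability at every admissible state, which is slightly stronger. One small inconsistency: $c=|a|+1$ need not be bounded, but this is harmless, since the finite-first-moment requirement holds pointwise anyway; alternatively, restrict to a positive-measure subset of $A$ on which $|a|$ is bounded.
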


\begin{proof}
	Consider a measurable subset of $\Gamma_{ij}$ with positive line measure and positive density. On this subset, let two conditional normal velocity distributions be
	$U_1\sim\operatorname{Unif}[-a,a]$
	and
	$U_2\sim\operatorname{Unif}[-b,b]$,
	where $a,b>0$ and $a\neq b$, while keeping the tangential conditional mean and all mobility statistics outside the subset identical. Both normal distributions have zero mean, and they produce the same conditional mean velocity and hence the same physical first-order flux. However,
	\[
	\mathbb{E}|U_1|=\frac{a}{2},
	\qquad
	\mathbb{E}|U_2|=\frac{b}{2}.
	\]
	Their contributions to the integral defining $G_{ij}$ are therefore different on a set of positive line measure. Thus, the same $\lambda$ and $\mathbf{J}$ can correspond to different gross crossing rates.
\end{proof}

Propositions~\ref{prop:pairwise_transport_ambiguity} and \ref{prop:gross_crossing_nonidentifiability} concern different information losses. The first occurs when physical pairwise transport is inferred from density evolution. The second occurs when gross crossing traffic is inferred from first-order population transport.

\vspace*{-1mm}
\subsection{Closure for Unresolved Normal Mobility}\label{subsec:mobility_closure} % S3.2

We next consider the second information limitation while treating the physical first-order population flux as given.

\begin{Corollary} 
	\label{cor:coherent_normal_mobility}
	If $U_{ij}$ is almost surely nonnegative or almost surely nonpositive at almost every point of $\Gamma_{ij}$, then
	\begin{equation} % eq.19
		G_{ij}(t)
		=
		\int_{\Gamma_{ij}}
		\left|
		\mathbf{J}(\mathbf{x},t)
		\cdot
		\mathbf{n}_{ij}
		\right|
		\,\mathrm{d}\ell.
		\label{eq:coherent_closure}
	\end{equation}
	If the mean normal flux also has a common sign along $\Gamma_{ij}$, then
	\begin{equation} % eq.20
		G_{ij}(t)=|F_{ij}(t)|.
		\label{eq:coherent_net_gross}
	\end{equation}
\end{Corollary}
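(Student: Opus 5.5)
The plan is to work pointwise on $\Gamma_{ij}$ and then integrate, using the gross-crossing expression in Proposition~\ref{prop:boundary_flow_decomposition}. By \eqref{eq:net_gross_flows},
\[
G_{ij}(t)=\int_{\Gamma_{ij}}\lambda(\mathbf{x},t)\,\mathbb{E}\!\left[|U_{ij}|\mid\mathbf{x},t\right]\mathrm{d}\ell .
\]
The first step is to show that $\mathbb{E}[|U_{ij}|\mid\mathbf{x},t]=|\mathbb{E}[U_{ij}\mid\mathbf{x},t]|$ at almost every $\mathbf{x}\in\Gamma_{ij}$.

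Fix a point where $U_{ij}$ has a single sign almost surely. If $U_{ij}\ge 0$ a.s., then $|U_{ij}|=U_{ij}$ a.s. Hence $\mathbb{E}|U_{ij}|=\mathbb{E}U_{ij}\ge 0$, and this equals $|\mathbb{E}U_{ij}|$. If $U_{ij}\le 0$ a.s., then $|U_{ij}|=-U_{ij}$ a.s., and the same conclusion follows. The finite first moment assumed for the conditional velocity distribution makes these expectations well defined.

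Next, since $\mathbf{J}\cdot\mathbf{n}_{ij}=\lambda\,\mathbb{E}[U_{ij}\mid\mathbf{x},t]$ and $\lambda\ge 0$, we get $\lambda\,\mathbb{E}[|U_{ij}|\mid\mathbf{x},t]=|\mathbf{J}\cdot\mathbf{n}_{ij}|$ at almost every point of $\Gamma_{ij}$. The exceptional set has zero line measure, so it does not affect the integral. Integrating over $\Gamma_{ij}$ then gives \eqref{eq:coherent_closure}. Equivalently, this shows that the first inequality in \eqref{eq:gross_lower_bound} is attained.

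For \eqref{eq:coherent_net_gross}, suppose $\mathbf{J}\cdot\mathbf{n}_{ij}$ has a common sign a.e. on $\Gamma_{ij}$, say $\mathbf{J}\cdot\mathbf{n}_{ij}\ge 0$. Then
\[
\int_{\Gamma_{ij}}|\mathbf{J}\cdot\mathbf{n}_{ij}|\,\mathrm{d}\ell=\int_{\Gamma_{ij}}\mathbf{J}\cdot\mathbf{n}_{ij}\,\mathrm{d}\ell=F_{ij}=|F_{ij}|,
\]
and the nonpositive case is symmetric. Combined with the previous step, this gives $G_{ij}=|F_{ij}|$.

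The argument is essentially routine. The only point needing care is that the sign of $U_{ij}$ may differ from point to point along the boundary. The pointwise identity holds regardless, and this is exactly why the second claim needs the extra common-sign hypothesis.
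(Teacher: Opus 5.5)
Your proof is correct and follows essentially the same route as the paper. Both arguments use the pointwise identity $\mathbb{E}[|U_{ij}|\mid\mathbf{x},t]=|\mathbb{E}[U_{ij}\mid\mathbf{x},t]|$ a.e.\ on $\Gamma_{ij}$, substitute it into the gross-rate expression of Proposition~\ref{prop:boundary_flow_decomposition}, and then use the common-sign hypothesis to exchange the absolute value with the integral; your version simply makes explicit the sign cases, the use of $\lambda\ge 0$, and the relation $\mathbf{J}\cdot\mathbf{n}_{ij}=\lambda\,\mathbb{E}[U_{ij}\mid\mathbf{x},t]$, which the paper leaves implicit.
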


\begin{proof}
	Under the stated condition,
	\[
	\mathbb{E}
	\left[
	|U_{ij}|
	\mid
	\mathbf{x},t
	\right]
	=
	\left|
	\mathbb{E}
	\left[
	U_{ij}
	\mid
	\mathbf{x},t
	\right]
	\right|
	\]
	at almost every boundary point. Substitution into \eqref{eq:net_gross_flows} gives \eqref{eq:coherent_closure}. If the mean normal flux has a common sign along $\Gamma_{ij}$, the integral of its absolute value equals the absolute value of its integral, which gives \eqref{eq:coherent_net_gross}.
\end{proof}

Corollary~\ref{cor:coherent_normal_mobility} does not require identical UE velocities. It requires only that the conditional normal-velocity distribution at almost every boundary point be supported on one side of zero. When this condition is violated, an additional description of local velocity variation is required.

For counter-moving populations, assume that the conditional UE velocity has finite second moments wherever $\lambda(\mathbf{x},t)>0$. Define the conditional velocity covariance, normal mean, and normal variance respectively as
\begin{equation} % eq.21
	\begin{aligned}
		\bm{\Sigma}(\mathbf{x},t)
		&\triangleq
		\mathbb{E}
		\left[
		\left(
		\mathbf{U}-\mathbf{v}
		\right)
		\left(
		\mathbf{U}-\mathbf{v}
		\right)^{\mathsf T}
		\mid
		\mathbf{x},t
		\right],\\
		m_{ij}(\mathbf{x},t)
		&\triangleq
		\mathbf{v}(\mathbf{x},t)
		\cdot
		\mathbf{n}_{ij}
		=
		\frac{
			\mathbf{J}(\mathbf{x},t)
			\cdot
			\mathbf{n}_{ij}
		}{
			\lambda(\mathbf{x},t)
		},\\
		s_{ij}^{2}(\mathbf{x},t)
		&\triangleq
		\mathbf{n}_{ij}^{\mathsf T}
		\bm{\Sigma}(\mathbf{x},t)
		\mathbf{n}_{ij}.
	\end{aligned}
	\label{eq:normal_moments}
\end{equation}
These definitions apply wherever $\lambda(\mathbf{x},t)>0$, and zero-density points do not contribute to the boundary integrals.

We adopt the Gaussian moment closure
\begin{equation} % eq.22
	U_{ij}\mid\mathbf{x},t
	\sim
	\mathcal{N}
	\left(
	m_{ij}(\mathbf{x},t),
	s_{ij}^{2}(\mathbf{x},t)
	\right).
	\label{eq:normal_dispersion_model}
\end{equation}
This closure is fully specified by the prescribed first two moments and yields closed-form one-sided velocity moments. It is also the maximum-entropy distribution on $\mathbb{R}$ for fixed mean and variance. The closure is imposed only on the boundary-normal velocity and does not require the complete two-dimensional UE velocity distribution to be Gaussian.

Let $\varphi$ and $\Phi$ denote the probability density function and cumulative distribution function of a standard normal random variable. For $s_{ij}>0$, define
\begin{equation} % eq.23
	z_{ij}
	\triangleq
	\frac{m_{ij}}{s_{ij}}.
	\label{eq:normal_standardization}
\end{equation}
The one-sided Gaussian moments give
\begin{equation} % eq.24
	\begin{aligned}
		H_{ij}(t)
		&=
		\int_{\Gamma_{ij}}
		\lambda
		\left(
		s_{ij}\varphi(z_{ij})
		+
		m_{ij}\Phi(z_{ij})
		\right)
		\,\mathrm{d}\ell,\\
		G_{ij}(t)
		&=
		\int_{\Gamma_{ij}}
		\lambda
		\left(
		2s_{ij}\varphi(z_{ij})
		+
		m_{ij}
		\left(
		2\Phi(z_{ij})-1
		\right)
		\right)
		\,\mathrm{d}\ell.
	\end{aligned}
	\label{eq:gaussian_closure_rates}
\end{equation}
All quantities inside the integrals in \eqref{eq:gaussian_closure_rates} are evaluated at $(\mathbf{x},t)$. The reverse rate $H_{ji}$ follows by replacing $m_{ij}$ with $-m_{ij}$. Hence,
\begin{equation} % eq.25
	H_{ij}(t)-H_{ji}(t)=F_{ij}(t),
	\label{eq:gaussian_flux_preservation}
\end{equation}
and the closure preserves the prescribed first-order population transport. As $s_{ij}\rightarrow0$, \eqref{eq:gaussian_closure_rates} approaches the coherent result in Corollary~\ref{cor:coherent_normal_mobility}. When $m_{ij}=0$ and $s_{ij}>0$, the net normal flux is zero while the gross crossing rate remains positive.

The covariance $\bm{\Sigma}$ cannot be inferred from density snapshots alone. It must therefore be supplied by an auxiliary mobility statistic, a trajectory subsample, or a prescribed residual-mobility model. The Gaussian closure augments the first-order transport with this independently supplied second-order information.

As an alternative to the Gaussian closure, prescribed velocity moments and a bounded-speed assumption yield distribution-free crossing-rate intervals. The following result applies classical moment-bound principles~\cite{BertsimasPopescu2005} to directional crossing rates.

\begin{Proposition} 
	\label{prop:moment_crossing_interval}
	Suppose that $|U_{ij}|\leq v_{\max}$ almost surely on $\Gamma_{ij}$, where $v_{\max}>0$, and the conditional moments $m_{ij}$ and $s_{ij}^2$ in \eqref{eq:normal_moments} are given. Let $\eta_{ij}=m_{ij}^2+s_{ij}^2$ and define
	\begin{equation} % eq.26
		\begin{aligned}
			G_{ij}^{\mathrm L}&=\int_{\Gamma_{ij}}\lambda\max\left\{|m_{ij}|,\frac{\eta_{ij}}{v_{\max}}\right\}\,\mathrm d\ell,\\
			G_{ij}^{\mathrm U}&=\int_{\Gamma_{ij}}\lambda\sqrt{\eta_{ij}}\,\mathrm d\ell.
		\end{aligned}
		\label{eq:moment_gross_bounds}
	\end{equation}
	Then $G_{ij}^{\mathrm L}\leq G_{ij}\leq G_{ij}^{\mathrm U}$ and
	\begin{equation} % eq.27
		\frac{F_{ij}+G_{ij}^{\mathrm L}}{2}\leq H_{ij}\leq\frac{F_{ij}+G_{ij}^{\mathrm U}}{2}.
		\label{eq:moment_directional_bounds}
	\end{equation}
	Both endpoints are attainable within the class of boundary-local conditional velocity distributions satisfying the prescribed moments and support. Thus, the interval is sharp within this class.
\end{Proposition}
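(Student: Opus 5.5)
The plan is to reduce everything to a pointwise statement at each boundary point $\mathbf{x}\in\Gamma_{ij}$ with $\lambda(\mathbf{x},t)>0$, then integrate. Fix such a point and write $U=U_{ij}\mid\mathbf{x},t$, with $\mathbb{E}[U]=m$, $\mathbb{E}[U^2]=\eta=m^2+s^2$, and $|U|\le v_{\max}$. By \eqref{eq:net_gross_flows}, $G_{ij}=\int_{\Gamma_{ij}}\lambda\,\mathbb{E}|U|\,\mathrm d\ell$. It therefore suffices to show
\[
\max\{|m|,\eta/v_{\max}\}\le\mathbb{E}|U|\le\sqrt{\eta},
\]
and that both ends are attained by admissible distributions. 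The directional bounds \eqref{eq:moment_directional_bounds} then follow immediately from $H_{ij}=(G_{ij}+F_{ij})/2$ in \eqref{eq:directional_from_net_gross}, because $F_{ij}$ is fixed by $m_{ij}$ through Proposition~\ref{prop:boundary_flow_decomposition}.

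\textbf{Pointwise inequalities.} The upper bound is Jensen's (or Cauchy--Schwarz) inequality: $\mathbb{E}|U|\le\sqrt{\mathbb{E}U^2}$. The bound $|m|\le\mathbb{E}|U|$ is the inequality already used in Proposition~\ref{prop:boundary_flow_decomposition}. The bound $\eta/v_{\max}\le\mathbb{E}|U|$ follows from $U^2=|U|\,|U|\le v_{\max}|U|$ almost surely; taking expectations gives $\eta\le v_{\max}\mathbb{E}|U|$. Integrating against $\lambda\,\mathrm d\ell$ yields $G^{\mathrm L}_{ij}\le G_{ij}\le G^{\mathrm U}_{ij}$. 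Measurability is inherited from the integrability assumptions of \eqref{eq:directional_ho_rate}.

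\textbf{Attainability.} This is the main step. Since the class is boundary-local, I can choose the distribution independently at each point, and it suffices to construct pointwise extremizers. Feasibility itself requires $\eta\le v_{\max}^2$ and $|m|\le v_{\max}$, which I assume implicitly.

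For the upper endpoint, I would use a symmetric-magnitude two-point law on $\{-\sqrt\eta,+\sqrt\eta\}$ with probabilities chosen so the mean is $m$, namely $p_\pm=(1\pm m/\sqrt\eta)/2$. Then $\mathbb{E}U^2=\eta$, $\mathbb{E}|U|=\sqrt\eta$, and the support lies in $[-v_{\max},v_{\max}]$.

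For the lower endpoint I would split into two cases.
\begin{itemize}
\item If $\eta/v_{\max}\le|m|$, take $U$ one-signed (say $m\ge0$) on $\{0,v\}$ with $v=\eta/m\le v_{\max}$ and $P(U=v)=m^2/\eta$. This gives $\mathbb{E}|U|=m$, and the correct variance is automatic.
\item If $\eta/v_{\max}>|m|$, take a three-point law on $\{-v_{\max},0,v_{\max}\}$ with $p_+-p_-=m/v_{\max}$ and $p_++p_-=\eta/v_{\max}^2\le1$. Then $\mathbb{E}|U|=\eta/v_{\max}$, and both probabilities are nonnegative precisely because $\eta/v_{\max}\ge|m|$.
\end{itemize}
The degenerate case $m=0$, $\eta=0$ is trivial.

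\textbf{Expected obstacle.} The subtle point is the edge cases: $s=0$, where the two-point law collapses to the point mass at $m$ and the interval closes, and the feasibility constraint $\eta\le v_{\max}^2$. I would also note that attainment of the integrated endpoints uses these pointwise extremizers simultaneously along $\Gamma_{ij}$. Sharpness is then claimed in the sense of this boundary-local class.
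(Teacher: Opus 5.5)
Your proposal is correct and follows the paper's proof essentially step for step. It uses the same three pointwise inequalities ($|m|\le\mathbb{E}|U|$, $\eta\le v_{\max}\mathbb{E}|U|$, and $\mathbb{E}|U|\le\sqrt{\eta}$), the same two-point law on $\{\pm\sqrt{\eta}\}$ for the upper endpoint, and the same case split at $\eta=v_{\max}|m|$ between the one-signed law on $\{0,\eta/|m|\}$ and the three-point law on $\{-v_{\max},0,v_{\max}\}$ for the lower endpoint. You also state explicitly that feasibility requires $\eta\le v_{\max}^2$, a condition the paper leaves implicit in the nonnegativity of $1-\eta/v_{\max}^2$.
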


\begin{proof}
	Write $m=m_{ij}$ and $\eta=\eta_{ij}$ at a fixed boundary point. The inequalities
	\[
	|m|\leq\mathbb E|U_{ij}|,\qquad
	\eta\leq v_{\max}\mathbb E|U_{ij}|,\qquad
	\mathbb E|U_{ij}|\leq\sqrt{\eta}
	\]
	give \eqref{eq:moment_gross_bounds}. Equation \eqref{eq:directional_from_net_gross} then yields \eqref{eq:moment_directional_bounds}. For $\eta>0$, the upper bound is attained by $U_{ij}\in\{-\sqrt{\eta},\sqrt{\eta}\}$ with probabilities $(1-m/\sqrt{\eta})/2$ and $(1+m/\sqrt{\eta})/2$. For $0<\eta\leq v_{\max}|m|$, the lower bound is attained by $U_{ij}=\operatorname{sgn}(m)\eta/|m|$ with probability $m^2/\eta$ and zero otherwise. For $\eta\geq v_{\max}|m|$, it is attained by $U_{ij}\in\{-v_{\max},0,v_{\max}\}$ with probabilities $(\eta-v_{\max}m)/(2v_{\max}^2)$, $1-\eta/v_{\max}^2$, and $(\eta+v_{\max}m)/(2v_{\max}^2)$, respectively. When $\eta=0$, both bounds vanish. Applying these constructions along the boundary establishes attainability of the integrated endpoints.
\end{proof}

The interval width is $(G_{ij}^{\mathrm U}-G_{ij}^{\mathrm L})/2$. It vanishes when $s_{ij}=0$ or $\eta_{ij}=v_{\max}^2$ almost everywhere on the boundary. The bounds do not require a prescribed velocity distribution.

\vspace*{-1mm}
\subsection{Stationary Isotropic Limit}\label{subsec:stationary_limit} % S3.3

The stationary isotropic case provides a consistency check with the classical stochastic-geometry result and does not use the Gaussian closure. Consider an infinite plane in which BSs form a homogeneous Poisson point process with intensity $\mu_{\mathrm{BS}}$. Let the UE density be constant, denoted by $\lambda_{\mathrm{u}}$, and let each UE move with constant speed $v_0$ in an independently and uniformly distributed direction. Nearest-BS association then induces a stationary Poisson--Voronoi tessellation.

For an arbitrary cell boundary, let $\Theta$ denote the angle between the UE velocity and a selected boundary normal. Isotropy gives
$\Theta\sim\operatorname{Unif}[0,2\pi)$.
Let $\mathcal{L}_{\mathrm e}$ denote the mean Voronoi edge length per unit area when each geometric edge is counted once. For a homogeneous Poisson--Voronoi tessellation,
$\mathcal{L}_{\mathrm e}=2\sqrt{\mu_{\mathrm{BS}}}$.
The mean absolute normal speed and the resulting crossing rate per UE are therefore given respectively by
\begin{equation} % eq.28
	\begin{aligned}
		\mathbb{E}
		\left[
		|v_0\cos\Theta|
		\right]
		&=
		\frac{2v_0}{\pi},\\
		\overline h_{\mathrm{UE}}
		&=
		\frac{
			\lambda_{\mathrm{u}}
			\mathbb{E}[|v_0\cos\Theta|]
			\mathcal{L}_{\mathrm e}
		}{
			\lambda_{\mathrm{u}}
		}
		=
		\frac{
			4v_0\sqrt{\mu_{\mathrm{BS}}}
		}{
			\pi
		}.
	\end{aligned}
	\label{eq:stationary_limit}
\end{equation}
Equation~\eqref{eq:stationary_limit} is the classical mean HO rate under isotropic mobility and nearest-BS association in a homogeneous Poisson cellular network \cite{Lin2013Mobility}.

This special case also illustrates the information absent from the first-order population field. Isotropy gives
$\mathbb{E}[\mathbf{U}]=\mathbf{0}$ and hence $\mathbf{J}=\mathbf{0}$, even though
$\mathbb{E}|U_{ij}|>0$.
Because the density is stationary, the minimum-energy F-STSG reconstruction also gives
$\widehat{\mathbf{J}}=\mathbf{0}$.
The nonzero crossing rate in \eqref{eq:stationary_limit} therefore arises from velocity variation that is not represented by the first-order population flux.

\vspace*{-1mm}
%========================IV=================================
\section{Handover Estimation from Density Snapshots}\label{sec:estimation} % S4

The preceding analysis separates HO estimation from aggregate observations into two steps. Density evolution is first used to reconstruct an admissible first-order population transport field, after which directional crossing rates are obtained under a specified local mobility closure. This section implements these steps for discrete density snapshots. The F-STSG reconstruction selects the minimum-energy transport field consistent with the observed population evolution. Proposition~\ref{prop:pairwise_transport_ambiguity} and Corollary~\ref{cor:potential_exact_recovery} characterize when this field can represent the physical pairwise transport, while the local mobility closure determines its conversion to directional crossing rates.

A density snapshot is a spatial estimate of the UE population over the service region at a given observation time. Such a field may be constructed from gridded positioning estimates, spatially aggregated measurement reports, or other location-aware population measurements~\cite{yuan2025constructing}. Cell-level UE counts provide only population integrals over the serving regions and therefore require an additional spatial interpolation or density model before continuous reconstruction can be applied. The effect of density observation errors is examined in Section~\ref{subsec:resolution_results}.

\vspace*{-1mm}
\subsection{Population Flux Reconstruction}\label{subsec:flux_reconstruction} % S4.1

Suppose that density snapshots are available at uniformly spaced times
$t_k=t_0+k\Delta t$, where $k=0,\ldots,K$ and $\Delta t>0$.
Let $\lambda_k(\mathbf{x})$ denote the density snapshot used for reconstruction at time $t_k$. Under ideal observation,
$\lambda_k(\mathbf{x})\! =\! \lambda(\mathbf{x},t_k)$, and perturbed snapshots are considered separately in the numerical evaluation. For each interior snapshot $k=1,\ldots,K-1$, the temporal derivative is approximated by the centered difference
\begin{equation} % eq.29
	\widetilde q_k(\mathbf{x})
	=
	\frac{
		\lambda_{k+1}(\mathbf{x})
		-
		\lambda_{k-1}(\mathbf{x})
	}{
		2\Delta t
	}.
	\label{eq:snapshot_derivative}
\end{equation}
For sufficiently smooth density fields, \eqref{eq:snapshot_derivative} is second-order accurate. In the midpoint-based numerical experiments, we instead use $(\lambda_{k+1}-\lambda_k)/\Delta t$ with the midpoint density $(\lambda_{k+1}+\lambda_k)/2$, retaining second-order temporal accuracy at the interval midpoint. For the closed system in Section~\ref{sec:system_model}, the weighted Neumann problem requires the temporal derivative to have zero spatial integral. This condition follows from constant total population, while small numerical inconsistencies are corrected in the corrected temporal derivative $q_k$ given by
\begin{equation} % eq.30
	q_k(\mathbf{x})
	=
	\widetilde q_k(\mathbf{x})
	\! -\!
	\frac{1}{|\Omega|}
	\! \int_{\Omega}\!
	\widetilde q_k(\mathbf{y})
	\,\mathrm{d}\mathbf{y}
	\mbox{ with }
	\! \int_{\Omega}\!
	q_k(\mathbf{x})
	\,\mathrm{d}\mathbf{x}
	=
	0,
	\label{eq:conservative_derivative}
\end{equation}
where $|\Omega|$ denotes the area of the service region. This correction is intended only for small compatibility errors. A physical change in the total UE population must instead be represented through an open boundary or a source term and should not be removed by \eqref{eq:conservative_derivative}.

Using $\lambda_k$ and $q_k$, the F-STSG reconstruction at time $t_k$ is obtained from
\begin{equation} % eq.31
	\begin{aligned}
		\nabla\!\cdot
		\left(
		\lambda_k(\mathbf{x})
		\nabla\phi_k(\mathbf{x})
		\right)
		&=
		q_k(\mathbf{x}),
		\qquad
		\mathbf{x}\in\Omega,\\
		\lambda_k(\mathbf{x})
		\nabla\phi_k(\mathbf{x})
		\cdot
		\mathbf{n}_{\Omega}(\mathbf{x})
		&=
		0,
		\qquad
		\mathbf{x}\in\partial\Omega,\\
		\widehat{\mathbf{J}}_k(\mathbf{x})
		&=
		-\lambda_k(\mathbf{x})
		\nabla\phi_k(\mathbf{x}).
	\end{aligned}
	\label{eq:snapshot_fstsg}
\end{equation}
The potential is determined up to an additive constant, which does not affect
$\widehat{\mathbf{J}}_k$. We assume that $\lambda_k$ is sufficiently regular and bounded away from zero on the reconstruction domain, as required by the weighted elliptic formulation. In numerical implementation, zero-density regions can be excluded from the active reconstruction support or handled through the regularization adopted by the F-STSG solver.

After spatial discretization, \eqref{eq:snapshot_fstsg} yields a sparse linear system with one potential unknown per active grid point. For the local finite-volume discretization used here, the number of nonzero matrix entries grows linearly with the number of active grid points. This sparsity property does not imply linear solution time, which also depends on matrix conditioning and iterative convergence. Reconstructions at different snapshot times are independent and can be processed in parallel.

The reconstructed signed flow across a common cell boundary is
\begin{equation} % eq.32
	\widehat F_{ij,k}
	=
	\int_{\Gamma_{ij}}
	\widehat{\mathbf{J}}_k(\mathbf{x})
	\cdot
	\mathbf{n}_{ij}
	\,\mathrm{d}\ell.
	\label{eq:snapshot_pairwise_flux}
\end{equation}
Because \eqref{eq:snapshot_fstsg} satisfies the population balance associated with $q_k$, the reconstructed cell flows obey
\begin{equation} % eq.33
	\sum_{j\in\mathcal{N}_i}
	\widehat F_{ij,k}
	=
	-
	\int_{\mathcal{V}_i}
	q_k(\mathbf{x})
	\,\mathrm{d}\mathbf{x}.
	\label{eq:snapshot_cell_balance}
\end{equation}
Equation~\eqref{eq:snapshot_cell_balance} provides both a population-consistency relation and a numerical check on the boundary integration. It ensures that the reconstructed pairwise flows reproduce the cell population change implied by the density snapshots, but it does not imply
$\widehat F_{ij,k}=F_{ij}(t_k)$
for every neighboring pair. Such equality depends on the identifiability of the physical transport from density evolution. In particular, Corollary~\ref{cor:potential_exact_recovery} identifies a potential-transport class for which the continuous F-STSG reconstruction coincides with the physical first-order flux when the exact density derivative is available.

\vspace*{-1mm}
\subsection{Directional Handover Estimation}\label{subsec:handover_estimation} % S4.2

The reconstructed field in \eqref{eq:snapshot_fstsg} determines signed first-order transport consistent with the observed density evolution. The remaining step is to map this transport to directional mobility-induced crossing rates. As established in Section~\ref{subsec:mobility_closure}, this mapping depends on the unresolved local normal-velocity distribution and is separate from the transport reconstruction.

When local normal motion is coherent, the reconstructed directional rate is
\begin{equation} % eq.34
	\widehat H_{ij,k}^{\mathrm{coh}}
	=
	\int_{\Gamma_{ij}}
	\left[
	\widehat{\mathbf{J}}_k(\mathbf{x})
	\cdot
	\mathbf{n}_{ij}
	\right]_{+}
	\,\mathrm{d}\ell.
	\label{eq:snapshot_coherent_rate}
\end{equation}
The reverse rate is obtained by replacing $\mathbf{n}_{ij}$ with
$-\mathbf{n}_{ij}$, and hence
\begin{equation} % eq.35
	\widehat H_{ij,k}^{\mathrm{coh}}
	-
	\widehat H_{ji,k}^{\mathrm{coh}}
	=
	\widehat F_{ij,k}.
	\label{eq:snapshot_coherent_net}
\end{equation}

The effect of first-order transport reconstruction error on the directional estimate is bounded directly.

\begin{Proposition} 
	\label{prop:coherent_ho_error_bound}
	Suppose that the physical normal mobility is coherent on $\Gamma_{ij}$ at time $t_k$. Then
	\begin{equation} % eq.36
		\left|
		\widehat H_{ij,k}^{\mathrm{coh}}
		-
		H_{ij}(t_k)
		\right|
		\!\leq\!
		\int_{\Gamma_{ij}}
		\left|
		\left(
		\widehat{\mathbf{J}}_k(\mathbf{x})
		\!-\!
		\mathbf{J}(\mathbf{x},t_k)
		\right)
		\cdot
		\mathbf{n}_{ij}
		\right|
		\,\mathrm{d}\ell.
		\label{eq:coherent_estimation_bound}
	\end{equation}
	Consequently, convergence of the reconstructed boundary-normal transport in
	$L^1(\Gamma_{ij})$
	implies convergence of the corresponding coherent directional crossing estimate.
\end{Proposition}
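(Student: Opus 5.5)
The plan is to express the true directional rate $H_{ij}(t_k)$ under coherent normal mobility as the boundary integral of the positive part of the physical normal flux. The bound then follows from the $1$-Lipschitz property of $[\cdot]_+$. Concretely, I first show that coherence gives, at almost every $\mathbf{x}\in\Gamma_{ij}$ with $\lambda(\mathbf{x},t_k)>0$,
\[
\mathbb{E}\left[[U_{ij}]_+\mid\mathbf{x},t_k\right]=\left[\mathbb{E}\left[U_{ij}\mid\mathbf{x},t_k\right]\right]_+ .
\]
If $U_{ij}\geq0$ almost surely, then $[U_{ij}]_+=U_{ij}$ and the mean is nonnegative. If $U_{ij}\leq0$ almost surely, both sides vanish. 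Multiplying by $\lambda$ and using $\mathbf{J}=\lambda\mathbf{v}$ from \eqref{eq:population_flux}, definition \eqref{eq:directional_ho_rate} becomes
\[
H_{ij}(t_k)=\int_{\Gamma_{ij}}\left[\mathbf{J}(\mathbf{x},t_k)\cdot\mathbf{n}_{ij}\right]_+\,\mathrm{d}\ell .
\]
Zero-density points contribute zero to both expressions. This is the same pointwise reduction used in the proof of Corollary~\ref{cor:coherent_normal_mobility}, applied to the one-sided moment rather than the absolute moment.

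Next I subtract this from \eqref{eq:snapshot_coherent_rate}. Both integrals are finite: the physical one by the integrability assumption behind \eqref{eq:directional_ho_rate}, and the reconstructed one by the regularity of $\widehat{\mathbf{J}}_k$. I move the absolute value inside the integral and apply the elementary inequality $|[a]_+-[b]_+|\leq|a-b|$. This inequality holds because $a\mapsto[a]_+=(a+|a|)/2$ is $1$-Lipschitz, or by checking the four sign cases. Taking $a=\widehat{\mathbf{J}}_k\cdot\mathbf{n}_{ij}$ and $b=\mathbf{J}\cdot\mathbf{n}_{ij}$ pointwise yields \eqref{eq:coherent_estimation_bound}.

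The convergence statement follows directly. Consider a sequence of reconstructions, for example under refinement of $\Delta t$ or of the spatial grid, whose boundary-normal components converge to $\mathbf{J}(\cdot,t_k)\cdot\mathbf{n}_{ij}$ in $L^1(\Gamma_{ij})$. The right-hand side of \eqref{eq:coherent_estimation_bound} is exactly that $L^1$ distance, so the estimates $\widehat H_{ij,k}^{\mathrm{coh}}$ converge to $H_{ij}(t_k)$.

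The only step needing care is the first one, the identification of $H_{ij}$ with the positive part of the normal flux. Two points must be checked there. First, coherence is only assumed almost everywhere on $\Gamma_{ij}$, and the exceptional null set does not affect the line integral. Second, the sign of the coherent direction may change along the boundary, which is why the identity is established pointwise rather than globally. No sign restriction on $\widehat{\mathbf{J}}_k$ is needed, since the Lipschitz bound holds for arbitrary real arguments.
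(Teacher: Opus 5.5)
Your proposal is correct and matches the paper's proof: under coherence you reduce $H_{ij}(t_k)$ to $\int_{\Gamma_{ij}}[\mathbf{J}\cdot\mathbf{n}_{ij}]_+\,\mathrm{d}\ell$, then apply $|[a]_+-[b]_+|\leq|a-b|$ pointwise on $\Gamma_{ij}$ and integrate. Your explicit pointwise check that $\mathbb{E}[[U_{ij}]_+\mid\mathbf{x},t_k]=[\mathbb{E}[U_{ij}\mid\mathbf{x},t_k]]_+$ usefully fills in a step the paper attributes to Corollary~\ref{cor:coherent_normal_mobility}, which is stated for $G_{ij}$ rather than $H_{ij}$.
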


\begin{proof}
	Under coherent normal mobility, Corollary~\ref{cor:coherent_normal_mobility} gives
	\[
	H_{ij}(t_k)
	=
	\int_{\Gamma_{ij}}
	\left[
	\mathbf{J}(\mathbf{x},t_k)
	\cdot
	\mathbf{n}_{ij}
	\right]_+
	\,\mathrm{d}\ell.
	\]
	The positive-part mapping is nonexpansive,
	and $|[a]_+-[b]_+|\leq|a-b|$.
	Applying this inequality pointwise on $\Gamma_{ij}$ and integrating gives \eqref{eq:coherent_estimation_bound}.
\end{proof}

Proposition~\ref{prop:coherent_ho_error_bound} bounds coherent directional estimation error by the difference between reconstructed and physical boundary-normal fluxes. This difference includes both snapshot reconstruction error and, when present, canonical-to-physical transport mismatch.

When local velocity dispersion cannot be neglected, let
$\widehat{\bm{\Sigma}}_k(\mathbf{x})$
denote an independently supplied estimate of the local velocity covariance. For locations where $\lambda_k(\mathbf{x})>0$, define
\begin{equation} % eq.37
	\widehat m_{ij,k}(\mathbf{x})
	=
	\frac{
		\widehat{\mathbf{J}}_k(\mathbf{x})
		\cdot
		\mathbf{n}_{ij}
	}{
		\lambda_k(\mathbf{x})
	},
	\qquad
	\widehat s_{ij,k}^{2}(\mathbf{x})
	=
	\mathbf{n}_{ij}^{\mathsf T}
	\widehat{\bm{\Sigma}}_k(\mathbf{x})
	\mathbf{n}_{ij}.
	\label{eq:snapshot_normal_statistics}
\end{equation}
Points with $\lambda_k\! =\! 0$ make no contribution to the crossing integrals.

For $\widehat s_{ij,k}>0$, let
\begin{equation} % eq.38
	\widehat z_{ij,k}
	=
	\frac{
		\widehat m_{ij,k}
	}{
		\widehat s_{ij,k}
	}.
	\label{eq:snapshot_normal_standardization}
\end{equation}
Using the Gaussian moment closure in \eqref{eq:normal_dispersion_model}, the dispersion-aware directional estimate is
\begin{equation} % eq.39
	\widehat H_{ij,k}^{\mathrm{disp}}
	=
	\int_{\Gamma_{ij}}
	\lambda_k
	\left(
	\widehat s_{ij,k}
	\varphi(\widehat z_{ij,k})
	+
	\widehat m_{ij,k}
	\Phi(\widehat z_{ij,k})
	\right)
	\,\mathrm{d}\ell.
	\label{eq:snapshot_dispersion_rate}
\end{equation}
All quantities inside \eqref{eq:snapshot_dispersion_rate} are evaluated at $\mathbf{x}$ and time $t_k$. The covariance field is additional mobility information and cannot be inferred from density snapshots alone. It may be obtained from an auxiliary mobility measurement, a trajectory subsample, or a prescribed residual-mobility model.

The dispersion-aware estimate also admits an error bound that separates transport reconstruction from the information absent in density evolution and first-order mobility. Define the Gaussian positive-part moment
\begin{equation} % eq.40
\mathcal K(f,a)=
\begin{cases}
 a\varphi(f/a)+f\Phi(f/a), & a>0,\\
 [f]_+, & a=0.
\end{cases}
\label{eq:gaussian_positive_part_kernel}
\end{equation}
At time $t_k$, let $f=\mathbf J\cdot\mathbf n_{ij}$,
$f^{\mathrm c}=\widehat{\mathbf J}\cdot\mathbf n_{ij}$,
$\widehat f_k=\widehat{\mathbf J}_k\cdot\mathbf n_{ij}$,
$a=\lambda s_{ij}$, and $\widehat a_k=\lambda_k\widehat s_{ij,k}$.
Here $\widehat{\mathbf J}$ is the canonical field obtained from the exact density evolution in \eqref{eq:fstsg_reconstruction}, whereas $\widehat{\mathbf J}_k$ is reconstructed from snapshots. Write $r_{ij}=\lambda\mathbb E[[U_{ij}]_+\mid\mathbf x,t_k]$. All local quantities are evaluated on $\Gamma_{ij}$ at $t_k$. Then
\begin{equation} % eq.41
\bigl|\widehat H_{ij,k}^{\mathrm{disp}}-H_{ij}(t_k)\bigr|
\le E_{\mathrm{rec}}+E_{\mathrm{id}}+E_{\mathrm{stat}}+E_{\mathrm{cl}},
\label{eq:directional_error_decomposition}
\end{equation}
where the boundary-integrated terms are
\begin{equation} % eq.42
\begin{aligned}
E_{\mathrm{rec}}&=\int_{\Gamma_{ij}}|\widehat f_k-f^{\mathrm c}|\,\mathrm d\ell,\\
E_{\mathrm{id}}&=\int_{\Gamma_{ij}}|f^{\mathrm c}-f|\,\mathrm d\ell,\\
E_{\mathrm{stat}}&=\frac{1}{\sqrt{2\pi}}\int_{\Gamma_{ij}}|\widehat a_k-a|\,\mathrm d\ell,\\
E_{\mathrm{cl}}&=\int_{\Gamma_{ij}}|\mathcal K(f,a)-r_{ij}|\,\mathrm d\ell.
\end{aligned}
\label{eq:directional_error_terms}
\end{equation}
Indeed, $|\partial_f\mathcal K|\leq1$ and $|\partial_a\mathcal K|\leq1/\sqrt{2\pi}$ for $a>0$, and the same Lipschitz bounds extend to $a=0$ by continuity. Equation~\eqref{eq:directional_error_decomposition} follows by the triangle inequality. The terms quantify bounds associated with snapshot reconstruction, canonical-to-physical flux mismatch, auxiliary statistics, and distributional closure, respectively, and they are not additive components of the realized estimation error.

For each snapshot, the estimated neighboring-cell rates form an $M\times M$ geometric crossing-rate matrix
\begin{equation} % eq.43
	\widehat{\mathbf{H}}_k
	=
	\left[
	\widehat H_{ij,k}
	\right].
\end{equation}
Here, $\widehat H_{ii,k}=0$ and entries corresponding to nonneighboring cell pairs are zero. The entry $\widehat H_{ij,k}$ is obtained from either the coherent closure in \eqref{eq:snapshot_coherent_rate} or the dispersion-aware closure in \eqref{eq:snapshot_dispersion_rate}, according to the adopted local mobility model.

The overall computational chain is
\begin{equation} % eq.44
	\begin{aligned}
		\{\lambda_{k-1},\lambda_k,\lambda_{k+1}\}
		&\longrightarrow q_k
		\longrightarrow \widehat{\mathbf{J}}_k,\\
		\widehat{\mathbf{J}}_k
		&\longrightarrow
		\left\{
		\widehat F_{ij,k},
		\ \widehat H_{ij,k}\ \text{(with closure)}
		\right\}.
	\end{aligned}
	\label{eq:estimation_chain}
\end{equation}
The first part of \eqref{eq:estimation_chain} reconstructs an F-STSG transport field consistent with the observed density evolution, whereas the final mapping determines directional crossing rates under the adopted local mobility closure. Under coherent normal mobility, no additional velocity statistic is required, and Proposition~\ref{prop:coherent_ho_error_bound} relates the directional estimation error directly to the boundary-normal transport error. Under unresolved counterflow, the dispersion-aware estimate additionally requires the local normal-velocity variance. Agreement between $\widehat{\mathbf{H}}_k$ and the physical mobility-induced crossing-rate matrix therefore depends separately on transport reconstruction and on the adequacy of the local mobility closure.

\vspace*{-1mm}
%======================== V ================================
\section{Numerical Results}\label{sec:numerical_results} % S5

This section evaluates the kinetic crossing model, its information limits, and density-snapshot reconstruction. Controlled mobility fields test the conditions in Section~\ref{sec:ho_characterization}, while an independent vehicular simulation provides external validation. Unless otherwise stated, association follows the nearest BS. In synthetic experiments, density, flux, and crossing rates are normalized by the total UE population $N_{\mathrm{UE}}$.

For a directed neighboring-cell quantity $Q_{ij}$, where $Q$ denotes $H$ or $F$, we use the normalized pairwise error
\begin{equation} % eq.45
	\epsilon_Q
	=
	\frac{
		\sum_{(i,j)\in\mathcal E_{\mathrm d}}
		|Q_{ij}^{\mathrm{est}}-Q_{ij}^{\mathrm{ref}}|
	}{
		\sum_{(i,j)\in\mathcal E_{\mathrm d}}
		|Q_{ij}^{\mathrm{ref}}|
	},
	\qquad Q\in\{H,F\},
	\label{eq:numerical_pairwise_error}
\end{equation}
where $\mathcal E_{\mathrm d}=\{(i,j):j\in\mathcal N_i\}$ is the set of ordered neighboring pairs. In the controlled experiments, the reference quantities are calculated from the prescribed physical mobility field. 
The flux-field error is $\epsilon_J=\|\widehat{\mathbf{J}}-\mathbf{J}\|_{L^2(\Omega)}/\|\mathbf{J}\|_{L^2(\Omega)}$, evaluated at the corresponding reconstruction time. 
We also evaluate cell-level population conservation using the normalized balance residual $\epsilon_{\mathrm{bal}}$,
 	defined as the accumulated absolute mismatch between the cell-population change and the net outgoing flow,
 	normalized by the corresponding absolute magnitudes. We distinguish $\epsilon_{\mathrm{bal}}^{\mathrm{phys}}$, computed from the physical flux, from $\epsilon_{\mathrm{bal}}^{\mathrm{F-STSG}}$, computed from the reconstructed flux. Both use the same cell-population changes but their respective boundary fluxes, and measure conservation consistency rather than pairwise reconstruction accuracy. 
For external evaluation, $Q_{ij}$ denotes window-integrated counts. The window-level net-flow error is denoted as $\epsilon_F^{(w)}$, and the paired baseline difference is:
\begin{equation} % eq.46
	\Delta\epsilon_F^{(w)}
	=
	\epsilon_{F,\mathrm{baseline}}^{(w)}
	-
	\epsilon_{F,\mathrm{F\text{-}STSG}}^{(w)},
	\label{eq:paired_net_error}
\end{equation}
where $\epsilon_{F,\mathrm{baseline}}^{(w)}$ and $\epsilon_{F,\mathrm{F\text{-}STSG}}^{(w)}$ are the normalized pairwise net-flow errors of the baseline and F-STSG, respectively, evaluated on the same directed cell pairs in window $w$. Mean paired differences across windows are distinct from the pooled normalized error over all directed pairs.

\vspace*{-1mm}
\subsection{Nonstationary Coherent Mobility}\label{subsec:coherent_results} % S5.1

We first consider a potential-transport field satisfying the exact-recovery condition in Corollary~\ref{cor:potential_exact_recovery} and the coherent boundary-normal mobility condition in Corollary~\ref{cor:coherent_normal_mobility}. The simulation region is a $2~\mathrm{km}\times2~\mathrm{km}$ square containing 100 fixed BSs and 10,000 UEs. The initial population consists of a uniform background and a Gaussian hotspot with mixture weight $0.65$ and standard deviation $180$~m. A horizontal potential flow with maximum speed $15$~m/s drives the population evolution under a no-flux outer boundary condition. Microscopic trajectories are sampled every $0.05$~s over 30 independent UE realizations. The F-STSG reconstruction uses noise-free analytical density snapshots at $1$~s intervals on a grid with $20$~m spacing. The physical reference is obtained directly from the prescribed mobility field, whereas the F-STSG estimate uses density snapshots without access to the physical velocity field.

\begin{figure}[t]
	\centering
	\includegraphics[width=0.99\columnwidth]{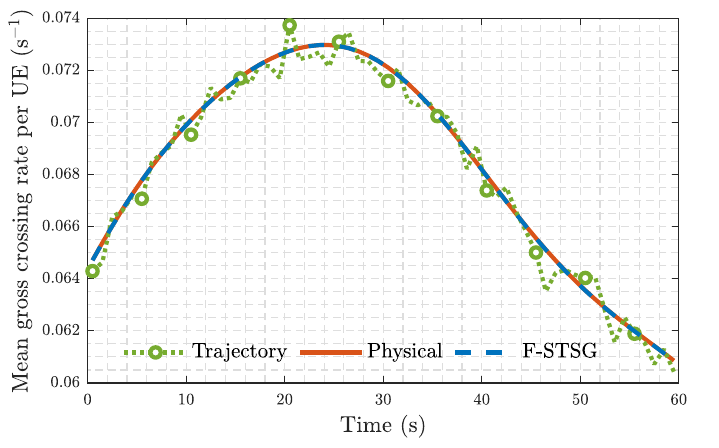}
	\vspace*{-9mm}
	\caption{Mean gross crossing rate per UE under nonstationary coherent mobility, comparing microscopic trajectory counts, physical boundary integration, and F-STSG reconstruction.}
	\label{fig:coherent_dynamic} % Fig.2
	\vspace*{-5mm}
\end{figure}

Fig.~\ref{fig:coherent_dynamic} compares the three gross crossing-rate estimates. The rate first rises from $0.064$ to $0.073~\mathrm{s}^{-1}$ and then falls to $0.061~\mathrm{s}^{-1}$ as the UE distribution moves across fixed serving regions. The physical and F-STSG curves nearly coincide, whereas trajectory counts fluctuate with finite sampling. Relative to the physical boundary reference, trajectories yield $\epsilon_H=\epsilon_F=9.60\times10^{-3}$, versus $7.32\times10^{-4}$ for F-STSG, and the cell-balance residual is $1.19\times10^{-2}$. The smaller F-STSG error reflects analytical density inputs rather than a general advantage over trajectory observations. The results are consistent with the potential-transport recovery condition and Proposition~\ref{prop:coherent_ho_error_bound}.

\begin{figure}[b]
	\vspace*{-5mm}
	\centering
	\includegraphics[width=0.9\columnwidth]{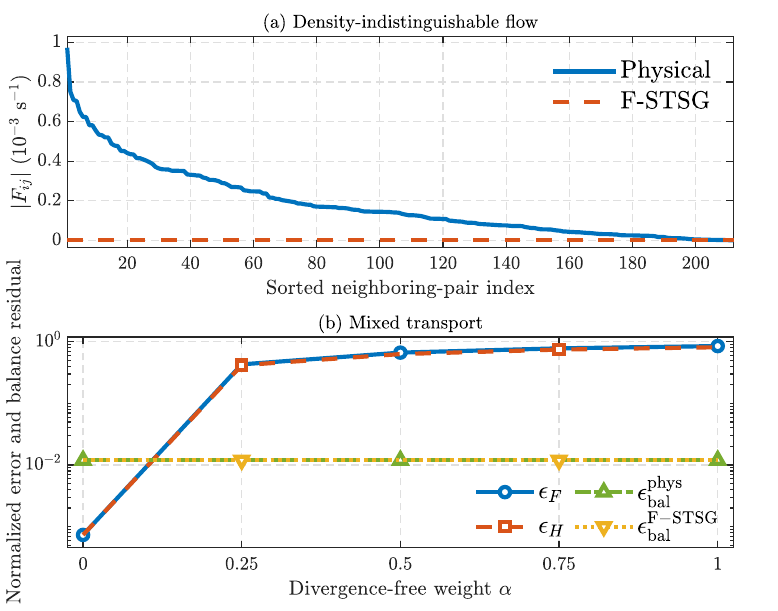}
	\vspace*{-8mm}
\caption{Effect of divergence-free transport on pairwise flow reconstruction. (a) Physical and F-STSG pairwise net-flow magnitudes under stationary density. (b) Net-flow and directional crossing errors versus the divergence-free weight $\alpha$, compared with the physical and reconstructed cell-balance residuals. The balance residuals measure conservation consistency, not pairwise reconstruction accuracy. Flows in (a) are normalized per UE.}
	\label{fig:circulation_limit} % Fig.3
	\vspace*{-1mm}
\end{figure}

\vspace*{-2mm}
\subsection{Information Limits of First-Order Population Fields}
\label{subsec:information_limits_results} % S5.2

We next examine two information limits of first-order population fields: the nonuniqueness of physical pairwise transport given density evolution and the inability of first-order transport to determine gross crossing rates.

\begin{figure}[!b]
\vspace*{-6mm}
	\centering
	\includegraphics[width=0.85\columnwidth]{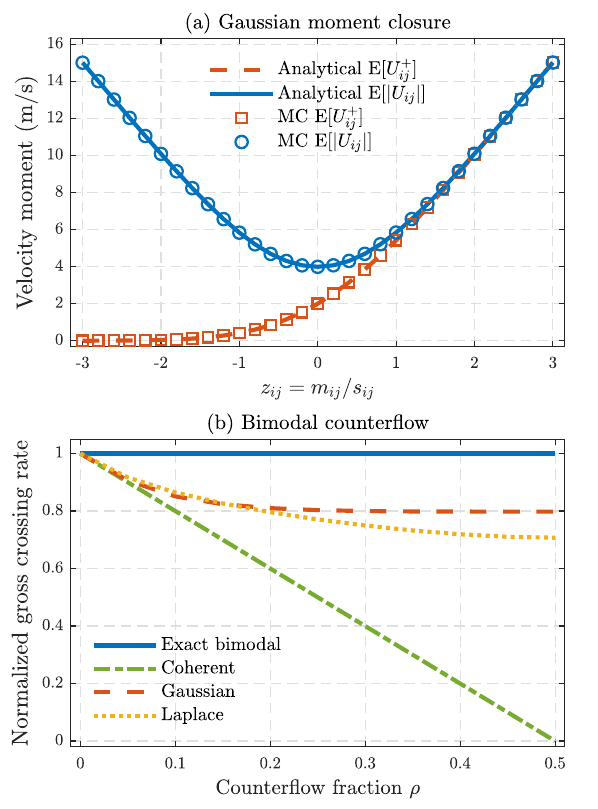}
	\vspace*{-4mm}
	\caption{Normal-velocity closure and counterflow. (a) Analytical Gaussian positive-part and absolute-velocity moments compared with Monte Carlo estimates. (b) Normalized gross crossing rates under bimodal counterflow, comparing the exact result with coherent, Gaussian, and Laplace closures. The Gaussian and Laplace closures are matched to the mean and variance of the bimodal distribution.}
	\label{fig:counterflow_closure} % Fig.4
	\vspace*{-1mm}
\end{figure}

To illustrate the first limit, we consider a disk of radius $1000$~m with 80 BSs and 100,000 uniformly distributed UEs undergoing rigid rotation at $0.01~\mathrm{s}^{-1}$. Over $60$~s, the divergence-free transport preserves the density while producing inter-cell crossings. Trajectory-derived pairwise flows agree with physical boundary integration to within a normalized error of $2.11\times10^{-2}$. In Fig.~\ref{fig:circulation_limit}(a), the physical pairwise flows remain nonzero, whereas the canonical F-STSG flux vanishes under stationary density, illustrating Proposition~\ref{prop:pairwise_transport_ambiguity}.

\begin{figure}[!b]
\vspace*{-6mm}
	\centering
	\includegraphics[width=0.85\columnwidth]{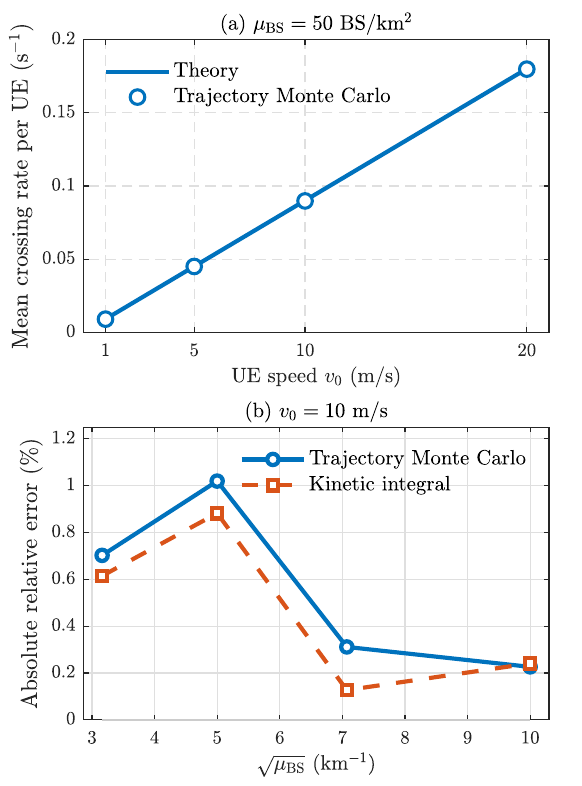}
	\vspace*{-4mm}
	\caption{Stationary Poisson--Voronoi crossing-rate limit. (a) Mean crossing rate per UE versus speed at $\mu_{\mathrm{BS}}=50$~BS/km$^2$, comparing the theoretical result with trajectory Monte Carlo estimates. (b) Absolute relative errors of trajectory Monte Carlo and kinetic boundary integration with respect to the theoretical rate across BS densities at $v_0=10$~m/s.}
	\label{fig:stationary_limit} % Fig.5
	\vspace*{-1mm}
\end{figure}

We further examine how unobserved divergence-free transport affects pairwise reconstruction. Starting from the potential flow in Section~\ref{subsec:coherent_results}, we add a divergence-free component generated by the stream function $\chi(x,y)=\sin^2(\pi x/L)\sin^2(\pi y/L)$, where $L=2$~km. Its $L^2$ norm is matched to that of the potential flux before scaling by $\alpha$. Because this component has zero divergence and is tangent to the outer boundary, varying $\alpha$ changes the physical pairwise flows without changing the density evolution or the F-STSG reconstruction.
Fig.~\ref{fig:circulation_limit}(b) shows that increasing $\alpha$ from zero to one raises the normalized net-flow and directional crossing errors from approximately $7.32\times10^{-4}$ to $0.854$ and $0.817$, respectively. In contrast, the physical and F-STSG cell-balance residuals remain near $1.2\times10^{-2}$, mainly due to differences between grid-based population integration and Voronoi boundary flux integration. Thus, both fields can satisfy nearly the same population balance while having substantially different pairwise flows, as characterized in Proposition~\ref{prop:pairwise_transport_ambiguity}.

For the second information limit, Fig.~\ref{fig:counterflow_closure}(a) compares the Gaussian moments in \eqref{eq:gaussian_closure_rates} with Monte Carlo estimates using $10^6$ independent samples per point, $s_{ij}=5$~m/s, and $z_{ij}=m_{ij}/s_{ij}\in[-3,3]$. The analytical and sampled positive-part and absolute-velocity moments agree closely.

Fig.~\ref{fig:counterflow_closure}(b) considers velocities of $+10$ and $-10$~m/s, with reverse-moving fraction $\rho\in[0,0.5]$. The exact normalized gross rate remains one, whereas the coherent estimate falls to zero as opposite flows cancel. At $\rho=0.5$, Gaussian and Laplace closures matched to the first two velocity moments yield $0.798$ and $0.707$, respectively. For this fixed-speed case, $\eta_{ij}=v_{\max}^2$ with $v_{\max}=10$~m/s, and therefore Proposition~\ref{prop:moment_crossing_interval} recovers the exact gross crossing rate.

\begin{figure*}[t]
\vspace*{-1mm}
	\centering
	\includegraphics[width=0.99\textwidth]{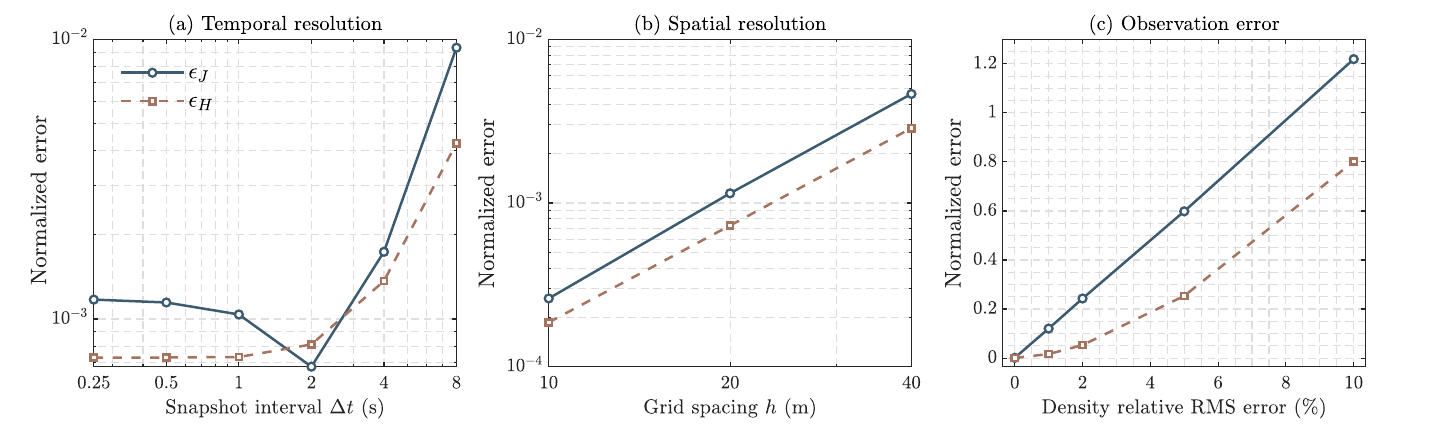}
	\vspace*{-5mm}
	\caption{Sensitivity of density-based reconstruction to observation resolution and accuracy. (a) Normalized flux and directional crossing errors versus snapshot interval $\Delta t$ at $h=20$~m. (b) Reconstruction errors versus spatial grid spacing $h$ at $\Delta t=0.5$~s. (c) Reconstruction errors versus relative root-mean-square (RMS) density error at $\Delta t=1$~s.}
	\label{fig:snapshot_interval} % Fig.6
	\vspace*{-5mm}
\end{figure*}

\vspace*{-1mm}
\subsection{Stationary Isotropic Limit}\label{subsec:stationary_results} % S5.3

We examine the stationary Poisson--Voronoi limit derived in Section~\ref{subsec:stationary_limit}. BS locations follow a homogeneous Poisson point process, and UEs move at constant speed in independent, uniformly distributed directions. Statistics are collected in a $2~\mathrm{km}\times2~\mathrm{km}$ central region with a $1500$~m guard margin. Each operating point uses 5000 UEs and 30 independent BS realizations. The trajectory Monte Carlo method directly counts intersections between UE trajectories and Voronoi boundaries, whereas the kinetic integral evaluates boundary-crossing rates using the realized Voronoi geometry and the prescribed velocity distribution. Both methods are evaluated against the theoretical result in \eqref{eq:stationary_limit}.

Fig.~\ref{fig:stationary_limit}(a) confirms the predicted linear dependence on UE speed. Across the BS densities in Fig.~\ref{fig:stationary_limit}(b), the maximum trajectory-to-theory deviation is $1.02\%$, while the largest trajectory-to-kinetic-integral difference is $0.184\%$ of the theoretical rate. Both results are consistent with the square-root BS-density scaling in \eqref{eq:stationary_limit}.

\vspace*{-1mm}
\subsection{Snapshot Resolution and Observation Error}\label{subsec:resolution_results} % S5.4

We next evaluate the sensitivity of density-based reconstruction to temporal resolution, spatial discretization, and observation errors under the coherent nonstationary model. The temporal experiment uses analytical density snapshots with intervals $\Delta t$ ranging from $0.25$ to $8$~s at a fixed grid spacing of $h=20$~m. The spatial experiment varies $h\in\{10,20,40\}$~m at $\Delta t=0.5$~s. To assess observation sensitivity, spatially correlated perturbations with a correlation length of $100$~m are added to the analytical density snapshots, with 30 independent realizations at each noise level.

In Fig.~\ref{fig:snapshot_interval}(a), errors change little below $\Delta t=1$--$2$~s: for $0.25\leq\Delta t\leq2$~s, $\epsilon_J<1.18\times10^{-3}$ and $\epsilon_H<8.12\times10^{-4}$. At $\Delta t=8$~s they increase to $9.33\times10^{-3}$ and $4.25\times10^{-3}$, respectively, indicating diminishing returns from finer temporal sampling at fixed spatial resolution.

Fig.~\ref{fig:snapshot_interval}(b) shows that reducing $h$ from $40$ to $10$~m decreases $\epsilon_J$ from $4.64\times10^{-3}$ to $2.61\times10^{-4}$ and $\epsilon_H$ from $2.87\times10^{-3}$ to $1.86\times10^{-4}$. For $h=40$, $20$, and $10$~m (2,500, 10,000, and 40,000 unknowns), mean elliptic-solve times are $8.37$, $38.75$, and $457.52$~ms per snapshot. Timings use a single-threaded Jacobi-preconditioned conjugate-gradient solver on an AMD Ryzen 5 9600X and exclude density acquisition and boundary integration.

Fig.~\ref{fig:snapshot_interval}(c) shows stronger sensitivity to density-observation error: at $1\%$, $2\%$, and $10\%$ relative RMS density error, the respective $(\epsilon_J,\epsilon_H)$ values are $(0.120,0.0175)$, $(0.243,0.0555)$, and $(1.219,0.825)$. The growth reflects error amplification in temporal differentiation, even when the physical transport is of potential form.

We next separate error sources using 60 native 1~s intervals on the same 100-BS geometry, 12 independent observation seeds, $h=20$~m, and nominal 2\% spatially correlated density noise. The staged estimates use, in turn, the physical kinetic reference, physical-mean Gaussian closure, canonical-mean Gaussian closure, and snapshot-based reconstruction. The final estimate includes a closure amplitude based on observed density and a variance estimated from 20\% of an independent auxiliary pool. Table~\ref{tab:error_decomposition} gives normalized adjacent-stage $L^1$ discrepancies and total error, using the same kinetic-reference denominator within each scenario. Unlike the boundary-integral bounds in \eqref{eq:directional_error_terms}, these are differences between successive estimates.
\begin{table}[t]
	\centering
	\caption{Controlled directional-error diagnosis under native 1~s evaluation. Stage discrepancies are not additive contributions to the total error.}
	\label{tab:error_decomposition} % Tab.III
	\footnotesize
	\vspace*{-2mm}
	\setlength{\tabcolsep}{4pt}
	\begin{tabular}{lcccc}
		\hline
		Mobility & Shape & Ident. & Recon. & Total\\
		\hline
		Potential coherent & 0 & 0 & 0.305 & 0.305\\
		Potential Gaussian & 0 & 0 & 0.222 & 0.225\\
		Nonpotential ($\alpha=0.5$) & 0 & 0.633 & 0.195 & 0.688\\
		Bimodal ($\rho=0.5$) & 0.202 & 0 & 0.510 & 0.551\\
		\hline
	\end{tabular}
	\vspace*{-5mm}
\end{table}

The nonpotential case shows a canonical-to-physical discrepancy of $0.633$, whereas balanced bimodal motion gives a Gaussian shape mismatch of $0.202$. Under noisy potential transport, the reconstruction discrepancies are $0.305$ for coherent motion and $0.222$ for Gaussian motion. The total also includes density-amplitude and variance-estimation effects. 
Unlike Fig.~\ref{fig:snapshot_interval}(c), which evaluates errors after temporal aggregation, Table~\ref{tab:error_decomposition} accumulates absolute errors over native $1$~s intervals. For the same $2\%$ noise realizations used in Fig.~\ref{fig:snapshot_interval}(c), these two evaluation orders yield $0.0555$ and $0.3021$, respectively.

\vspace*{-1mm}
\subsection{External Validation and Baseline Comparison}\label{subsec:external_validation} % S5.5

We evaluate F-STSG using the public FUMD-AI Krak\'ow vehicular dataset~\cite{Bernad2026Krakow}, generated with Simulation of Urban MObility (SUMO) and OMNeT++/Simu5G. The dataset contains vehicle trajectories, serving-cell records, and nine fixed BS locations. We construct nearest-BS Voronoi regions and use nonoverlapping $60$~s windows with fixed vehicle cohorts, obtaining 99 valid windows and 7523 geometric crossings. The reference crossings are extracted from consecutive $1$~s trajectory segments. Density snapshots are estimated from vehicle positions without using serving-cell records or crossing labels.

Reconstruction uses grid spacing $h\simeq d_{\mathrm{BS}}/30$ and Gaussian density bandwidth $d_{\mathrm{BS}}/10$, where $d_{\mathrm{BS}}$ is the median nearest-BS distance. 
The iterative solver converges on 86 of the 99 valid windows, giving net-flow and directional errors of $0.592$ and $0.606$, respectively, with an estimated-to-reference crossing-count ratio of $49.9\%$. An equivalent direct solver resolves all 99 windows without materially changing the original results. The baseline comparison uses 47 held-out windows with complete outputs from F-STSG and the three baselines.

We compare F-STSG with three baselines. Graph-MinNorm reconstructs pairwise flows from cell-population balance, while Graph-Weighted additionally accounts for Voronoi boundary lengths. Continuity-OF (continuity-constrained optical flow) estimates a spatial transport field using the population continuity constraint and smoothness regularization. All methods are evaluated on the same held-out windows and directed cell pairs. Unlike the graph-based baselines, F-STSG and Continuity-OF use spatial density information.

\begin{figure}[t]
	\centering
	\includegraphics[width=0.9\columnwidth]{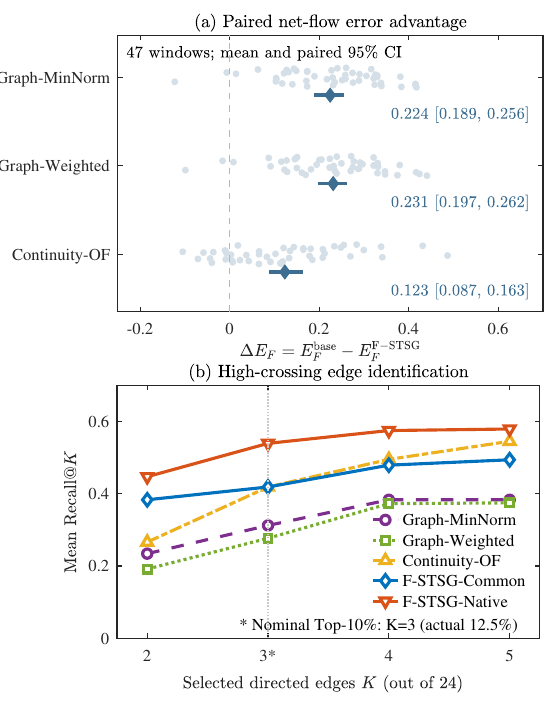}
	\vspace*{-5mm}
	\caption{Baseline comparison on 47 held-out windows. (a) Window-level paired net-flow error differences between each baseline and F-STSG. Dots indicate individual windows, while diamonds and horizontal intervals show mean differences and 95\% confidence intervals. Positive differences favor F-STSG. (b) Mean recall of high-crossing directed edges versus the selection budget $K$ among 24 eligible edges. The two F-STSG variants differ in their directional closure and aggregation order.}
	\label{fig:baseline_comparison} % Fig.7
	\vspace*{-5mm}
\end{figure}

Fig.~\ref{fig:baseline_comparison}(a) shows positive mean paired net-flow error differences of $0.224$, $0.231$, and $0.123$ against Graph-MinNorm, Graph-Weighted, and Continuity-OF, respectively, and all paired 95\% confidence intervals lie above zero. The pooled normalized net-flow error on the same held-out subset is nevertheless $0.585$, distinct from the window-level paired differences and indicating substantial remaining absolute error.

Fig.~\ref{fig:baseline_comparison}(b) compares two aggregation orders using the same reconstructed flux: F-STSG-Common takes the positive part after pairwise aggregation, whereas F-STSG-Native integrates the local positive part first. At $K=3$, Native improves high-crossing-edge recall from $0.418$ to $0.539$ and reduces directional error from $0.772$ to $0.611$. Thus, aggregation order affects directional estimates even when the reconstructed net flow is unchanged.

Across all 99 valid windows, raw Simu5G serving-cell transitions occur $2.567$ times as often as geometric crossings, whereas dataset-defined actionable events have a count ratio of $0.928$. Geometric crossings thus characterize mobility-related boundary activity rather than executed protocol handovers.

\begin{table}[t]
	\centering
\caption{Effects of temporal aggregation and kinetic closure on directional crossing errors.}
	\label{tab:temporal_kinetic} % Tab.IV
	\footnotesize
	\vspace*{-3mm}
	\setlength{\tabcolsep}{4pt}
	\begin{tabular}{cc}
		\hline
		\multicolumn{2}{c}{(a) Temporal aggregation}\\
		\hline
		Window (s) & Oracle $E_H$ (95\% CI)\\
		\hline
		1  & 0.068 [0.058, 0.079]\\
		5  & 0.235 [0.213, 0.256]\\
		15 & 0.415 [0.386, 0.441]\\
		58 & 0.584 [0.556, 0.608]\\
		\hline
	\end{tabular}
	
	\vspace{2mm}
	
	\setlength{\tabcolsep}{3pt}
	\begin{tabular}{lcccc}
		\hline
		\multicolumn{5}{c}{(b) Controlled kinetic closure: normalized error $E_H$}\\
		\hline
		Mobility & Kinetic & P-G & F-C & F-G\\
		\hline
		Low        & 0.0153 & 0.0153 & 0.0189 & 0.0155\\
		Gaussian   & 0.0246 & 0.0246 & 0.5332 & 0.0250\\
		Bimodal    & 0.0232 & 0.0559 & 0.5462 & 0.0574\\
		Correlated & 0.0256 & 0.0256 & 0.5426 & 0.0765\\
		\hline
	\end{tabular}
	
	\vspace{1mm}
	\parbox{\columnwidth}{\scriptsize
		P-G: Gaussian closure with physical mean transport;
		F-C: F-STSG with coherent closure;
		F-G: F-STSG with Gaussian closure and correct auxiliary variance.
		Kinetic denotes the physical kinetic reference evaluated against
		finite-sample trajectory counts.
	}
	\vspace*{-5mm}
\end{table}

\vspace*{-1mm}
\subsection{Temporal Aggregation and Kinetic Closure}\label{subsec:temporal_kinetic} % S5.6

Table~\ref{tab:temporal_kinetic}(a) examines temporal aggregation using trajectory-derived net crossings as oracle inputs on 47 held-out windows. Evaluation uses the common $58$~s interior interval of each $60$~s window, with 95\% confidence intervals. The coherent-closure error increases from $0.068$ at $1$~s to $0.584$ at $58$~s as opposite directional crossings cancel over longer intervals.

Table~\ref{tab:temporal_kinetic}(b) compares kinetic closures under controlled nonstationary mobility. P-G uses the physical mean transport, whereas F-G uses the prescribed true velocity variance, and these auxiliary quantities serve as oracle information for evaluating the closures. Under matched Gaussian mobility, F-G reduces the directional error from $0.5332$ (F-C) to $0.0250$, close to the physical kinetic reference. The errors increase to $0.0574$ under bimodal mobility and $0.0765$ under correlated mobility, reflecting distributional mismatch and, in the latter case, canonical-to-physical transport differences.

\vspace*{-1mm}
\section{Conclusion}\label{sec:conclusion} % S6

This paper has developed a kinetic formulation of geometric serving-region crossings in nonstationary cellular networks. By linking directional crossing rates to boundary-normal velocity moments, the formulation distinguishes signed net transfer from gross bidirectional crossings. We have established two information limits concerning physical transport recovery from density evolution and gross crossing recovery from first-order transport. Potential transport permits exact canonical flux recovery, while coherent and Gaussian closures characterize directional rates under specified mobility conditions. Sharp moment-constrained bounds further quantify crossing-rate uncertainty without prescribing a velocity distribution. The stationary Poisson--Voronoi crossing law follows as a special case.

Controlled experiments have examined transport ambiguity, snapshot reconstruction error, and velocity-closure mismatch. On held-out vehicular simulations, F-STSG achieves lower net-flow error than the considered baselines, although substantial absolute error remains. Auxiliary velocity statistics improve directional estimates under matched Gaussian mobility. Future work will extend the framework to radio-dependent association and handover mechanisms.

\bibliographystyle{IEEEtran}
\bibliography{ref}

\end{document}